\documentclass{article}
\usepackage[colorlinks,urlcolor=blue,linkcolor=blue,citecolor=blue]{hyperref}
\usepackage{cite}
\usepackage{amsmath,amsfonts,amssymb,amsthm,mathtools,hyperref,mathrsfs,bbm}
\usepackage{graphicx}
\usepackage{algorithm,algorithmic}
\usepackage{graphicx,float,subcaption,color,import} 
\usepackage[top=0.75in, bottom=0.79in, left=0.75in, right=0.75in]{geometry}
\newtheorem{assumption}{Assumption}
\newtheorem{lemma}{Lemma}
\newtheorem{theorem}{Theorem}
\newtheorem{remark}{Remark}

\begin{document}

\title{\textbf{Formal Entropy-Regularized Control of Stochastic Systems}}

\author{M. J. T. C. van Zutphen, G. Delimpaltadakis, D. Antunes}

\date{}

\maketitle

\textbf{\textit{abstract:}} Analyzing and controlling system entropy is a powerful tool for regulating predictability of control systems. Applications benefiting from such approaches range from reinforcement learning and data security to human-robot collaboration. In continuous-state stochastic systems, accurate entropy analysis and control remains a challenge. In recent years, finite-state abstractions of continuous systems have enabled control synthesis with formal performance guarantees on objectives such as stage costs. However, these results do not extend to entropy-based performance measures. We solve this problem by first obtaining bounds on the entropy of system \textit{discretizations} using traditional formal-abstractions results, and then obtaining an additional bound on the difference between the entropy of a continuous distribution and that of its discretization. The resulting theory enables formal entropy-aware controller synthesis that trades predictability against control performance while preserving formal guarantees for the original continuous system. More specifically, we focus on minimizing the linear combination of the KL divergence of the system trajectory distribution to uniform --- our system entropy metric --- and a generic cumulative cost. We note that the bound we derive on the difference between the KL divergence to uniform of a given continuous distribution and its discretization can also be relevant in more general information-theoretic contexts. A set of case studies illustrates the effectiveness of the method.






\section{Introduction}
\label{sec:introduction}

In stochastic systems, the concept of \emph{entropy} provides a quantitative measure of predictability, capturing how random or predictable the potential behavior of a system is. As a result, entropy optimization --– maximizing or minimizing system entropy --– has emerged as a powerful concept across several fields~\cite{Kuiper2020UnpredictableMotion,Wijlens2024RoadToComfort,Schwarting2019SocialBehavior,Ornia2025PredictableRL,vanZutphen2024PredictableIMDPs,biondi2015quantifying,guo2023balancing,Guo2023EMPatrollerEM,Liu2020Ready,Liu2024Enhancing,haarnoja17reinforcement,haarnoja18bSA,ahmed19aUnderstandingTI}. Increased predictability can improve passenger comfort~\cite{Kuiper2020UnpredictableMotion} and disturbance robustness~\cite{Wijlens2024RoadToComfort} in self-driving cars, and cooperation~\cite{Schwarting2019SocialBehavior,Ornia2025PredictableRL,vanZutphen2024PredictableIMDPs} in autonomous mobile agents more generally. In contrast, a high level of unpredictability is crucial for confidentiality in data security applications~\cite{biondi2015quantifying}, surveilance tasks~\cite{guo2023balancing,Guo2023EMPatrollerEM}, preventing exploitation of predictable self-driving car behavior~\cite{Liu2020Ready,Liu2024Enhancing}, and the promotion of exploration and improved learning robustness in reinforcement learning (RL)~\cite{haarnoja17reinforcement,haarnoja18bSA,ahmed19aUnderstandingTI,Ornia2025PredictableRL,ashlag2025state}.

Quantification of predictability in finite-state stochastic systems generally relies on the concept of \emph{information entropy} as introduced by Shannon~\cite{Shannon1948,Cover1991ElementsOI}. While recent RL literature often considers rewarding (or penalizing) the level of randomness in its input selection, i.e., \emph{policy entropy}~\cite{haarnoja17reinforcement,haarnoja18bSA,ahmed19aUnderstandingTI}, our work focuses on quantification and optimization of the level of randomness in system behavior, i.e., \emph{trajectory entropy}. Efficient analysis of trajectory entropy in Markov systems is generally non-trivial, and has thus been the subject of ongoing research, which we briefly review below.

Early work on trajectory entropy optimization considers the problem of finding a maximal-entropy Markov chain (MC) consistent with a given interval Markov chain (IMC) specification; a generalization of an MC in which transition probabilities are constrained to lie within prescribed uncertainty intervals~\cite{Biondi2014MaximizingEO}. Subsequent work~\cite{Chen2014OnTC} investigates the complexity of computing the (maximum) entropy and entropy rate for MCs, IMCs and Markov decision processes (MDPs). 
Follow up work later extend this MDP method~\cite{Chen2014OnTC} to include temporal logic constraints~\cite{Savas2020EntropyMDP_TL,Chen2022EntropyRM_LTL}, developing algorithms that can constructively obtain associated optimal policies, and applying them to a practical surveillance task~\cite{Chen2023EntropyRM}. A Bellman equation balancing cost with trajectory entropy, and an associated RL framework that promotes unbiased exploration by maximizing trajectory entropy under cost constraints are obtained in~\cite{Srivastava2021ParameterizedMDPs}. Policies that minimize a formal upper bound on an entropy-regularized objective are obtained for uncertain systems formulated as interval MDPs (IMDPs) in~\cite{vanZutphen2024PredictableIMDPs}.
Efficient RL exploration through maximum-entropy policies is explored~\cite{tiapkin23a}, 
again showing that the trajectory-entropy maximizing policy can be found by solving an entropy-regularized Bellman equation. Trajectory entropy penalization in RL is developed in~\cite{Ornia2025PredictableRL}, yielding agents that humans and other agents can more easily anticipate. A very similar approach minimizes the \emph{action-trajectory} entropy instead~\cite{you2025trajectory}.

These existing methods are all fundamentally tied to the finite-state setting, while many systems of interest naturally evolve over continuous domains. Abstraction methods have been developed to enable formal analysis and control of continuous-state stochastic systems with complex specifications through carefully constructed finite-state representations~\cite{Soudjani2015AggregationAC,HaesaertSoudjaniAbate2017,Hahn2019IntervalMD,lavaei2022automated,dutreix2022abstraction,BadingsRomaoAbateJAIR23,Delimpaltadakis2023IntervalMD,Mathiesen2024IntervalMDPjlAV,Suilen2024RobustRMDP,Banse2025MemoryDA,Delimpaltadakis2024FormalAO}. By discretizing the state and action spaces of the original system, a finite-state model --- the abstraction --- is constructed in a \emph{formal way}; that is, formal guarantees accompanying verification and control design over the abstraction carry over to the original system (e.g., if the abstraction is ``safe'', then the original system is ``safe''). Interval Markov decision processes (IMDPs), which generalize MDPs by allowing transition probabilities to vary within intervals, have emerged as a standard abstraction formalism for (uncertain) continuous Markov systems~\cite{vanZutphen2024PredictableIMDPs,Soudjani2015AggregationAC,HaesaertSoudjaniAbate2017,Hahn2019IntervalMD,lavaei2022automated,dutreix2022abstraction,BadingsRomaoAbateJAIR23,Delimpaltadakis2023IntervalMD,Mathiesen2024IntervalMDPjlAV,Givan2000BoundedparameterMD}. These methods enable sound and computationally tractable guarantees for verification and synthesis tasks through finite models in general settings.

However, the unique challenge of entropy minimization or maximization does not naturally integrate into these existing abstraction frameworks the way expected cumulative stage costs~\cite{Givan2000BoundedparameterMD,HaesaertSoudjaniAbate2017,lavaei2022automated,BadingsRomaoAbateJAIR23,Delimpaltadakis2023IntervalMD,Mathiesen2024IntervalMDPjlAV}, or LTL specifications~\cite{Soudjani2015AggregationAC,Lahijanian2015FormalVA,LaurentiLahijanianAbateCardelliKwiatkowska2021,lavaei2022automated,BadingsRomaoAbateJAIR23,BadingsJansenRomaoAbate2023Correct} do. In this work, we show that existing abstraction-based approaches applied to system entropy fail to preserve or bound entropy properties of the underlying system. We then propose a solution in the form of an adapted framework that addresses these shortcomings. 

The main purpose of this paper is thus to develop a formal theory on \emph{abstraction-based entropy guarantees}, i.e., to enable the formal analysis and control of continuous-state Markov system entropy through finite-state abstractions. We quantify entropy in our framework through the Kullback Leibler (KL) divergence~\cite{Cover1991ElementsOI} of the trajectory distribution to the uniform distribution (hereafter referred to as the \emph{KL divergence to uniform}), a well-behaved and lossless proxy for system entropy.

We first obtain a number of results that enable the computation of formal lower-bounds on the KL divergence to uniform of a continuous MC trajectory through IMC-abstractions. We then extend these results to additionally construct, under appropriate assumptions, formal upper-bounds on the same KL divergence to uniform. In the process, we derive bounds on the difference between the KL divergence to uniform of a given continuous distribution, and the KL divergence to uniform of its discretizations, which can have their own standalone significance in the broader information-theoretic context. 

Afterwards, we extend the method to abstraction-based \emph{policy synthesis} for continuous-state MDPs. Having obtained two distinct approaches toward the construction of formal upper bounds above, we present an algorithm that finds policies minimizing each. We conclude by demonstrating the effect of discretization resolution on bound tightness in a numerical example. We further consider a continuous-state control problem where we regularize the cumulative cost optimization by an entropy penalty. The system behavior resulting from the policies we generate realizes a reduced level of entropy in the resulting system trajectories.

The organization of the paper is as follows. In Section~\ref{sec:markov_system}, we introduce the prerequisite preliminary concepts. We introduce the continuous-state Markov chain and present two approaches to its discretization: the mean-value trajectory distribution discretization and the interval Markov chain (IMC) abstraction. We then briefly discuss the quantification of system entropy in the context of compact continuous-state Markov systems. In Section~\ref{sec:bounds_sys_ent_abstractions}, we develop our main results on system-entropy (as expressed through the KL divergence to uniform of the trajectory distribution) guarantees on continuous-state Markov systems through finite-state abstractions. In Section~\ref{sec:generalizing_to_control}, we extend these results to Markov decision processes (MDPs) and present a bound-minimizing policy synthesis algorithm. In Section~\ref{sec:examples}, we present the results of two numerical experiments that demonstrate the efficacy of the system entropy abstraction approaches developed above. We finish by presenting the conclusions in Section~\ref{sec:conc_and_discussion}.

\section{Markov chains and entropy}\label{sec:markov_system}

Let $\lambda$ denote the Lebesgue measure on $\mathbb{R}^n$. For a measurable set $A \subseteq \mathbb{R}^n$, we define the space of probability density functions over $A$ as
\[
\mathcal{P}(A) \textstyle :=\{ q : A \to \mathbb{R}_{\ge 0} \mid q \text{ measurable}, \int_{A} q(x)~\mathrm{d}x = 1 \},
\]
i.e., all functions $q \in \mathcal{P}(A)$ admit probability measures $\mu(B) = \int_B p(x)~\mathrm{d}x$ for all measurable $B \subseteq A$, where $\mu$ is \emph{absolutely continuous w.r.t.\ }$\lambda$.

Let $\mathbb{P}^{n}$ further denote the standard $(n-1)$-dimensional probability simplex as
\[
\mathbb{P}^{n} := \{ p \in \mathbb{R}^{n}_{\ge 0} \;|\; \textstyle\sum_{i=1}^{n} p_i = 1 \},
\]
i.e., the set of all probability distributions over $n$ elements.

For the sake of simplicity, we make use of the convention that $0\log0:=\lim_{x\downarrow 0}x\log x=0$.

\subsection{Markov chains}

Let a discrete-time stochastic dynamical system $\mathcal{M}$ evolve on an $n_{x}$-dimensional hyperrectangular state space $\mathcal{X}:=[a_1,b_1]\times[a_2,b_2]\times\cdots\times[a_{n_{x}},b_{n_{x}}]\subset\mathbb{R}^{n_{x}}$ at time steps $k\in\{0,1,\dots,K\}$, for $K\in\mathbb{N}$ --- this is without loss of generality; the results apply to any compact state space. Let its Markovian dynamics over horizon $K$ be described by the state transition probability density (transition kernel)
\[
\operatorname{Prob}
 (x_{k+1}\in A \ | \ x_k)= \int_{A}q(x_k,x_{k+1})~\mathrm{d}x_{k+1},
\]
where $A\subseteq\mathcal{X}$, $x_k\in\mathcal{X}$ is the state of the system at timesteps $k\in\mathbb{N}\cup\{0\}$, combined with an initial state distribution density $q_0$, described as
\[
\operatorname{Prob}(x_0\in A)=\int_{A}q_0(x_0)~\mathrm{d}x_0.
\]
The introduction of \emph{actions} is postponed to Section~\ref{sec:generalizing_to_control}, in the context of controller design. Note that we have implicitly assumed absolute continuity of the probability measures underlying $q_0$ and $q$ w.r.t.\ the Lebesgue measure. 

Denoting the space of all possible trajectories by $\mathcal{S}:=\mathcal{X}^{K+1}$, which is a hyperrectangular compact subset of \mbox{$n$-dimensional} real space, with $n=(K+1)n_{x}$, system $\mathcal{M}$ admits a trajectory-level description through distribution $T$ over trajectories $s\in\mathcal{S}$, as
\begin{equation}
T(s)=q_0(x_0)\prod_{k=1}^{K}q(x_{k-1},x_k),\label{eq:T}
\end{equation}
where $s=x_0,\dots,x_{K}$, is a trajectory in $\mathcal{S}$.

\subsubsection{Trajectory distribution discretization}

Let $(\mathcal{X}_{i})_{i\in X}$ be a hyperrectangular grid-based partition of $\mathcal{X}$. As a consequence, every element of the state space partition $\mathcal{X}_i$, for $i\in X:=\{1,2,\dots,|X|\}$, but also every element of the induced partition of the trajectories $\mathcal{S}_t$ for $t\in S:=\{1,2,\dots,|S|\}$, is hyperrectangular, with $|S|=|X|^{K+1}$, see Fig.\ \ref{fig:illustration}. Let discrete probability distribution $p=\begin{bmatrix}p_1 & p_2 &\cdots &p_{|S|}\end{bmatrix}^\top$ denote the discretization of $T$ associated with partition $(\mathcal{S}_t)_{t\in S}$, as
\begin{equation}
    p_{t} := \int_{\mathcal{S}_t}T(s)~\mathrm{d}s,\label{eq:discretization_of_T}
\end{equation}
for all $t\in S$.

\def\svgwidth{0.495\textwidth} 
\begin{figure}[tb]
    \centering
    \input{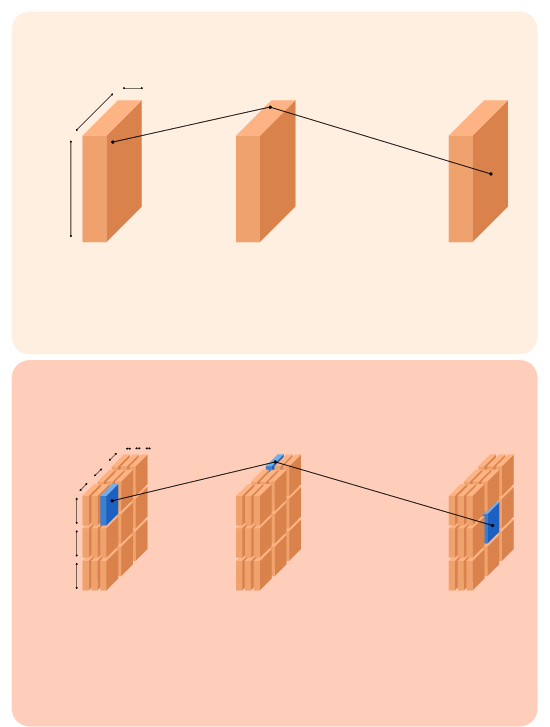}
    \caption{Discretization $(\mathcal{S}_t)_{t\in S}$ of the trajectory space $\mathcal{S}$, as induced by a hyperrectangular discretization $(\mathcal{X}_{i})_{i\in X}$ of the state space $\mathcal{X}$.}
    \label{fig:illustration}
\end{figure}

\subsubsection{Interval Markov chain abstraction}

Let the state space partition $(\mathcal{X}_i)_{i\in X}$ further induce a finite \emph{abstraction} of the original continuous space Markov process $\mathcal{M}$ as an interval Markov chain (IMC)~\cite{Biondi2014MaximizingEO} on state space $X$. Its initial finite-state distribution is then given by 
\begin{subequations}
\label{eq:abstract_IMC_definition}
\begin{equation}
\pi_{i}=\int_{\mathcal{X}_{i}}q_0(x_0)~\mathrm{d}x_0,
\end{equation}
for $i\in X$, while its (time-varying) Markov kernel is constrained to the interval
\begin{equation}
P_{k}(i)\in\mathbb{P}_{i}:=\{p^{\prime}\in\mathbb{P}^{|X|}\mid \underline{P}_{ij}\le p^{\prime}_{j}\le \overline{P}_{ij}\},\label{eq:mathbbPi}
\end{equation}
where
\begin{equation}
\overline{P}_{ij} = \max_{x\in \mathcal{X}_i}\int_{\mathcal{X}_j}q(x,x^{\prime})~\mathrm{d}x^{\prime}, \ \ \ \underline{P}_{ij} = \min_{x\in \mathcal{X}_i}\int_{\mathcal{X}_j}q(x,x^{\prime})~\mathrm{d}x^{\prime},
\end{equation}
\end{subequations}
for all $i,j\in X$, and all $k\in\{0,1,\dots,K\}$. Let $\mathbb{T}$ denote the set of all trajectory laws $p^{\prime}\in\mathbb{P}^{|S|}$, over $S=X^{K+1}$, admissible in IMC~\eqref{eq:abstract_IMC_definition}, i.e., satisfying initial distribution $\pi$, and admitting a time-varying Markov kernel sequence $P_{k}(i)\in\mathbb{P}_{i}$, for $i\in X$, $k\in\{0,1,\dots,K\}$.

In summary, we consider three distinct objects. The principal one being the continuous-state discrete-time MC $\mathcal{M}$, specified by either its state-space-level Markov kernel, or its trajectory-space-level probability distribution representation, as 
\begin{equation}
    \mathcal{M}=(\mathcal{X},q_0,q,K)\equiv(\mathcal{S},T). \label{eq:underlying_system_calM}
\end{equation}
Besides this \emph{``underlying''} system, we consider its trajectory-level discretization $p$~\eqref{eq:discretization_of_T} on discrete trajectory-space $S=X^{K+1}$, and the IMC abstraction defined by the set of time-varying MCs in $\mathbb{T}$~\eqref{eq:abstract_IMC_definition}.

\begin{assumption}\label{assumptions}
Our work relies on the following assumptions being satisfied:
\begin{itemize}
    \item $\mathcal{X}_i$ is hyperrectangular for every $i\in X$,
    \item $q$ and $q_0$ are continuous and differentiable almost everywhere,
    \item $\nabla q$ and $\nabla q_0$ are bounded as $\|\nabla q\|_{\infty}\le L_{\nabla q}$, $\|\nabla q_0\|_{\infty}\le L_{\nabla q}$ for some $L_{\nabla q}\in\mathbb{R}$ almost everywhere.
\end{itemize}
\end{assumption}

As a consequence of these assumptions, $T$ is bounded, continuous, and differentiable almost everywhere. $\nabla T$ is further bounded as $\|\nabla T\|_{\infty}\le L$, for some $L\in\mathbb{R}$ almost everywhere. All continuous distributions ($q$, $q_0$, $T$) can be written as densities.

\subsection{Entropy quantification} \label{sec:entropy_in_unc_spaces} 

System entropy is quantified by the \emph{KL divergence to uniform} of the distribution over trajectories of the system. For a distribution $T\in\mathcal{P}(\mathcal{S})$, this KL divergence to uniform is defined as
\begin{equation}
\operatorname{KL}(T\|U)=\int_{\mathcal{S}}T(s)\log \frac{T(s)}{1/\lambda(\mathcal{S})}~\mathrm{d}s,
\label{eq:T_to_uniform}
\end{equation}
where $U(s)=1/\lambda(\mathcal{S})$ is the uniform distribution over $\mathcal{S}$, and $\operatorname{KL}(\cdot \| U)$ is the KL divergence~\cite{Cover1991ElementsOI} of a distribution w.r.t.\ this uniform distribution. 

\begin{remark}
    The following relation is useful to understand the connection between the KL divergence to uniform of a distribution $T\in\mathcal{P}(\mathcal{S})$, and its differential entropy $h(T)$ 
    \[
    \begin{split}
        &\operatorname{KL}(T\|U) = \log\lambda(\mathcal{S}) +\int_{\mathcal{S}}T(s)\log T(s)~\mathrm{d}s \\
        &\hspace{4.5cm}=\max_{T^{\prime}\in\mathcal{P}(\mathcal{S})}~\{  h(T^{\prime})  \} - h(T),
    \end{split}
    \]
    as differential entropy $h(T) = -\int_{\mathcal{S}}T(s)\log T(s)~\mathrm{d}s$~\cite{Cover1991ElementsOI}.
\end{remark}

In discretized spaces, we quantify predictability through the \emph{discrete} KL divergence to uniform, as
\begin{equation}
\operatorname{KL}_{D}(p\|p^{\mathrm{u}})=\sum_{t\in S}p_t\log\frac{p_t}{p^{\mathrm{u}}_t}=\sum_{t\in S}p_t\log\frac{p_t\lambda(\mathcal{S})}{\lambda(\mathcal{S}_t)},
\label{eq:p_to_uniform}
\end{equation}
where $p$ and $p^{\mathrm{u}}$ are the discretizations of trajectory distribution $T$, and the uniform distribution $U$, according to~\eqref{eq:discretization_of_T}, respectively. 

Although the KL divergence to uniform and differential entropy are strongly related, their properties differ in a number of important ways. Most importantly, the Shannon entropy~\cite{Cover1991ElementsOI} of discrete distribution $p$~\eqref{eq:discretization_of_T} does not converge to the differential entropy of $T$ under increasingly fine discretizations, while we see that for shrinking $\overline{\lambda}(\mathcal{S}_t):=\max_{t\in S}\lambda(\mathcal{S}_t)$, we do obtain convergence in the KL divergence to uniform, as
\[
\begin{split}
    \hspace{2mm}\lim_{\mathclap{\overline{\lambda}(\mathcal{S}_t)\to 0}}\operatorname{KL}_{D}(p\|p^{\mathrm{u}}) &=\lim_{\overline{\lambda}(\mathcal{S}_t)\to 0} \sum_{t\in S}p_t\log\frac{p_t \lambda(\mathcal{S})}{\lambda(\mathcal{S}_t)} \\
    &=\lim_{\overline{\lambda}(\mathcal{S}_t)\to 0} \sum_{t\in S}\int_{\mathcal{S}_t}T(s)~\mathrm{d}s\log T(s_t) \lambda(\mathcal{S}) \\
    &=\lim_{\overline{\lambda}(\mathcal{S}_t)\to 0} \sum_{t\in S}\int_{\mathcal{S}_t}T(s)\log T(s_t) \lambda(\mathcal{S})~\mathrm{d}s \\
    &= \int_{\mathcal{S}}T(s)\log T(s) \lambda(\mathcal{S})~\mathrm{d}s \\
    &= \operatorname{KL}(T\|U),
\end{split}
\]
where we have used that $\lim_{\overline{\lambda}(\mathcal{S}_t)\to 0}\int_{\mathcal{S}_t}T(s)~\mathrm{d}s\to T(s_t)\lambda (\mathcal{S}_t)$ for any $s_t\in \mathcal{S}_t$, and all $t\in S$.

\section{Formally Bounding System Entropy} \label{sec:bounds_sys_ent_abstractions} 

In this section, we develop the theory required to obtain formal upper- and lower bounds on the KL divergence to uniform of the trajectories of original system $\mathcal{M}$, from an IMC abstraction $\mathbb{T}$.

In Section~\ref{sec:upper_bound}, we first obtain an approach that yields a lower bound on the KL divergence to uniform specifically, see Fig.~\ref{fig:flow_chart_bounds_LB}. Valid upper bounds require additional results, leading to two unique upper-bounding approaches. A ``global'' correction of standard IMC methods, which can be added a-posteriori to classic IMC algorithms to correct the bound they compute, and a more integrated ``local'' approach that augments the classic IMC algorithm itself, see Fig.~\ref{fig:flow_chart_bounds_UB}. Section~\ref{sec:practical_tools} then discusses techniques on how these results can be applied to actually compute bounds in practice.

\subsection{Bound construction}\label{sec:upper_bound}

From the abstraction completeness results for IMCs established in Thm.~3 and Prop.~3 of~\cite{Meng2022RobustlyCF}, it follows that for any hyperrectangular partition $(\mathcal{X}_{i})_{i\in X}$ of state space $\mathcal{X}$, the discretized trajectory distribution $p$ induced by~\eqref{eq:discretization_of_T}, is an element of the admissible IMC trajectory set $\mathbb{T}$, i.e., 
\begin{subequations}\label{eq:abstraction_completeness}
\begin{equation}
    p\in\mathbb{T}.
\end{equation}
Moreover, the abstraction is approximately complete in the sense that, as the partition is refined,
\begin{equation}
    \lim_{\overline{\lambda}(\mathcal{X}_{i})\to0}\mathbb{T}=\{p\}, 
\end{equation}
\end{subequations}
in the discretization resolution limit $\overline{\lambda}(\mathcal{X}_{i})\to 0$, where convergence is taken in the weak sense, i.e., all finite-time joint distributions converge (equivalently, all finite-dimensional marginals converge componentwise). This is fundamentally what enables existing IMC abstraction methods to obtain guarantees on discrete trajectory behavior $p$, and as a consequence $T$, while having access only to interval Markov chain $\mathbb{T}$. 

Letting
\[
\overline{\operatorname{KL}}_{D}(\mathbb{T}\|\rho):=\max_{p^{\prime}\in\mathbb{T}} \operatorname{KL}_{D}(p^{\prime}\|\rho),
\]
\[
\underline{\operatorname{KL}}_{D}(\mathbb{T}\|\rho):=\min_{p^{\prime}\in\mathbb{T}}\operatorname{KL}_{D}(p^{\prime}\|\rho),
\]
describe the extreme KL divergence values of distributions satisfying the IMC $\mathbb{T}$ w.r.t.\ an arbitrary probability distribution $\rho\in\mathbb{P}^{|S|}$, and combining this with abstraction completeness~\eqref{eq:abstraction_completeness}, allows us to conclude that for any hyperrectangular partition $(\mathcal{X}_{i})_{i\in X}$ of state space $\mathcal{X}$, and any probability distribution $\rho\in \mathbb{P}^{|S|}$ --- among which $\rho=p^{\mathrm{u}}$, which is our distribution of interest --- we have
\begin{subequations}
\begin{equation}
    \overline{\operatorname{KL}}_{D}(\mathbb{T}\|\rho)\ge \operatorname{KL}_{D}(p\|\rho) \ge \underline{\operatorname{KL}}_{D}(\mathbb{T}\|\rho).\label{eq:abstraction_bounds}
\end{equation}
 Additionally, that in the limit of $\overline{\lambda}(\mathcal{X}_{i})\to 0$, \eqref{eq:abstraction_bounds} holds with equality, as
\begin{equation}
    \begin{split}
    &\lim_{\overline{\lambda}(\mathcal{X}_{i})\to 0} \overline{\operatorname{KL}}_{D}(\mathbb{T}\|\rho)= \lim_{\overline{\lambda}(\mathcal{X}_{i})\to 0}\operatorname{KL}_{D}(p\|\rho)\\
    &\hspace{4.3cm}= \lim_{\overline{\lambda}(\mathcal{X}_{i})\to 0} \underline{\operatorname{KL}}_{D}(\mathbb{T}\|\rho).\label{eq:abstraction_bounds_convergence}
    \end{split}
\end{equation}\label{eq:abstraction_bounds_general}
\end{subequations}

Note that this result~\eqref{eq:abstraction_bounds_general} enables us to use IMC abstraction~$\mathbb{T}$ to obtain bounds on the KL divergence to uniform of trajectory-distribution discretization~$p$. However, this alone is not enough to characterize their relationship to the KL divergence to uniform of the original trajectory distribution~$T$. In order to relate this result to the system entropy level $\operatorname{KL}(T\|U)$ directly, we require the following lemmas.

\begin{lemma}[Continuous-Discrete Discrepancy] \label{lem:simplification_of_Delta}
    For a distribution density $T$ defined on a connected compact set $\mathcal{S}\subset\mathbb{R}^{n}$ and its discretization $p_{t}:=\int_{\mathcal{S}_{t}}T(s)~\mathrm{d}s$ over partition $\bigcup_{t\in S} \mathcal{S}_{t} = \mathcal{S}$, $\lambda(\mathcal{S}_{t}\cup \mathcal{S}_{t^{\prime}})=0$ for all $t\ne t^{\prime}$, we can write the difference between the KL divergence to uniform of $T$ and the KL divergence to uniform of its discretization $p$, as
    \begin{equation}
    \operatorname{KL}(T\|U)-\operatorname{KL}_{D}(p\|p^{\text{u}})=\operatorname{KL}(T\|T_{D}), \label{eq:discretization_differnce_as_single_KL}
    \end{equation}
    where $T_{D}$ is the piece-wise continuous representation of $p$, defined as 
    \[
    T_D(s) = \sum_{t\in S} \mathbbm{1}_{\mathcal{S}_{t}}(s)\frac{p_{t}}{\lambda(\mathcal{S}_{t})}, \ \ \ \ \mathbbm{1}_{\mathcal{S}_{t}}(s) = 
    \begin{cases} 
        1, & \text{if } s \in \mathcal{S}_{t}, \\
        0, & \text{otherwise,}
    \end{cases}
    \]
    where \(\mathbbm{1}_{\mathcal{S}_{t}}(s)\) is thus an indicator function.
\end{lemma}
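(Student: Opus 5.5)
The plan is a direct computation: write both KL divergences as integrals over $\mathcal{S}$, split the integral over the partition cells, and observe that the discrete term equals the integral of $T\log(T_D\lambda(\mathcal{S}))$. (The condition $\lambda(\mathcal{S}_t\cup\mathcal{S}_{t'})=0$ in the statement is surely meant as $\lambda(\mathcal{S}_t\cap\mathcal{S}_{t'})=0$, i.e.\ cells overlap only on null sets; I use it in that sense.) The first step is to rewrite $\operatorname{KL}_D(p\|p^{\mathrm{u}})$. By \eqref{eq:p_to_uniform} and the definition of $p_t$,
\[
\operatorname{KL}_D(p\|p^{\mathrm{u}})=\sum_{t\in S}\int_{\mathcal{S}_t}T(s)~\mathrm{d}s\,\log\frac{p_t\lambda(\mathcal{S})}{\lambda(\mathcal{S}_t)}.
\]
On $\mathcal{S}_t$, up to a null set, $T_D(s)=p_t/\lambda(\mathcal{S}_t)$, so the logarithm is constant there. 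It can therefore be moved inside the cell integral, giving
\[
\operatorname{KL}_D(p\|p^{\mathrm{u}})=\sum_{t\in S}\int_{\mathcal{S}_t}T(s)\log\big(T_D(s)\lambda(\mathcal{S})\big)\mathrm{d}s=\int_{\mathcal{S}}T(s)\log\big(T_D(s)\lambda(\mathcal{S})\big)\mathrm{d}s.
\]
The last equality uses that the cells cover $\mathcal{S}$ and overlap only on null sets, so summing over cells reproduces the integral over $\mathcal{S}$.

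Next, write $\operatorname{KL}(T\|U)=\int_{\mathcal{S}}T\log(T\lambda(\mathcal{S}))~\mathrm{d}s$ from \eqref{eq:T_to_uniform} and subtract. The $\log\lambda(\mathcal{S})$ terms cancel, leaving
\[
\int_{\mathcal{S}}T(s)\log\frac{T(s)}{T_D(s)}~\mathrm{d}s=\operatorname{KL}(T\|T_D),
\]
which is \eqref{eq:discretization_differnce_as_single_KL}. I will also note that $T_D\in\mathcal{P}(\mathcal{S})$, since $\int_{\mathcal{S}}T_D=\sum_t p_t=1$, so the right-hand side is a genuine KL divergence between densities.

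The main obstacle is well-definedness, namely the cells where the densities vanish and the integrability of each term. If $p_t=0$, then $T=0$ almost everywhere on $\mathcal{S}_t$, so both the discrete summand and the integral over $\mathcal{S}_t$ vanish under the convention $0\log 0=0$. This also gives absolute continuity $T\ll T_D$, so $\operatorname{KL}(T\|T_D)$ is well defined and finite. If $p_t>0$, the term $\log(T_D\lambda(\mathcal{S}))$ is a finite constant on the cell, so that piece is integrable. For $\operatorname{KL}(T\|U)$ itself to be finite, I will use that $T$ is bounded on the compact set $\mathcal{S}$ (a consequence of Assumption~\ref{assumptions}) and that $x\log x$ is bounded below. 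Hence $T\log(T\lambda(\mathcal{S}))$ is integrable, and splitting the difference of integrals into separate integrals is legitimate.
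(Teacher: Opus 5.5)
Your proof is correct and follows the same route as the paper's. Like the paper, you split $\operatorname{KL}(T\|U)$ over the cells, use $p_t=\int_{\mathcal{S}_t}T$ to move the cell-wise constant $\log\frac{p_t\lambda(\mathcal{S})}{\lambda(\mathcal{S}_t)}$ inside the integral, and combine the logarithms to recognize $\operatorname{KL}(T\|T_D)$. Reading $\lambda(\mathcal{S}_t\cup\mathcal{S}_{t'})=0$ as $\lambda(\mathcal{S}_t\cap\mathcal{S}_{t'})=0$, and handling the cells with $p_t=0$ and the integrability questions, are sensible additions that the paper omits. One small correction: finiteness of $\operatorname{KL}(T\|T_D)$ does not follow from $T\ll T_D$ alone. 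It follows from the boundedness of $T$, which you invoke afterwards.
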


\textit{Proof:} See Appendix~\ref{apx:simplification_of_Delta}. \qed

While Lemma~\ref{lem:simplification_of_Delta}, when combined with~\eqref{eq:abstraction_bounds_general}, is insufficient to \emph{upper-bound} the value of $\operatorname{KL}(T\|U)$ through abstraction $\mathbb{T}$, as we neither have access to $T$, nor to $p$, it is sufficient to guarantee a \emph{lower bound}.

\begin{theorem}[Entropy Abstraction Lower Bound]\label{thm:lower_bound}
    \[
    \operatorname{KL}(T\|U)\ge \operatorname{KL}_{D}(p\|p^{\mathrm{u}}) \ge \underline{\operatorname{KL}}_{D}(\mathbb{T}\|p^{\mathrm{u}}),
    \]    
    and
    \[
    \operatorname{KL}(T\|U)=\lim_{\overline{\lambda}(\mathcal{X}_{i})\to 0} \operatorname{KL}_{D}(p\|p^{\mathrm{u}})=\lim_{\overline{\lambda}(\mathcal{X}_{i})\to 0}    \underline{\operatorname{KL}}_{D}(\mathbb{T}\|p^{\mathrm{u}}).
    \]
\end{theorem}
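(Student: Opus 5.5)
The plan is to chain Lemma~\ref{lem:simplification_of_Delta} with the abstraction bounds~\eqref{eq:abstraction_bounds_general}, and then handle the limit statement with a separate argument.

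\emph{Inequality chain.} Lemma~\ref{lem:simplification_of_Delta} gives $\operatorname{KL}(T\|U)-\operatorname{KL}_{D}(p\|p^{\mathrm{u}})=\operatorname{KL}(T\|T_D)$. We first check that $T_D$ is a valid density on $\mathcal{S}$. It is nonnegative, and $\int_{\mathcal{S}} T_D = \sum_t p_t = 1$ because the cells $\mathcal{S}_t$ overlap only on null sets. Gibbs' inequality then gives $\operatorname{KL}(T\|T_D)\ge 0$. Absolute continuity holds automatically: if $p_t=0$, then $T=0$ almost everywhere on $\mathcal{S}_t$, so the convention $0\log 0=0$ handles those cells. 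This yields the first inequality, $\operatorname{KL}(T\|U)\ge \operatorname{KL}_D(p\|p^{\mathrm u})$. For the second inequality we invoke~\eqref{eq:abstraction_bounds} with $\rho=p^{\mathrm u}$; this is valid because $p\in\mathbb{T}$ by~\eqref{eq:abstraction_completeness}.

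\emph{Limits.} The second equality in the limit statement is exactly~\eqref{eq:abstraction_bounds_convergence} with $\rho=p^{\mathrm u}$. It therefore suffices to show $\operatorname{KL}(T\|T_D)\to 0$ as $\overline{\lambda}(\mathcal{X}_i)\to 0$. Since the partition is a hyperrectangular grid, every trajectory cell $\mathcal{S}_t$ has diameter $\delta\to 0$. Continuity and compactness make $T$ uniformly continuous, and the Lipschitz bound $\|\nabla T\|_\infty\le L$ gives this quantitatively. Hence on each cell, $T_D$ equals the average of $T$ over that cell, and by the mean value theorem it equals $T(\xi_t)$ for some $\xi_t\in\mathcal{S}_t$. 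This gives $\sup_s |T(s)-T_D(s)|\le L\sqrt{n}\,\delta\to 0$, so $T_D\to T$ uniformly.

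The main obstacle is that $\log(T/T_D)$ is not controlled where $T$ is small or vanishes. Uniform convergence alone does not give a uniform bound on the log-ratio. Instead of bounding the log-ratio, I would use $\log x\le x-1$:
\[
\operatorname{KL}(T\|T_D)=\int T\log\frac{T}{T_D}\le \int \frac{T^2}{T_D}-1=\int\frac{(T-T_D)^2}{T_D},
\]
which is the $\chi^2$ divergence. The last equality uses $\int T=\int T_D=1$. Its cellwise cost $\int_{\mathcal{S}_t}(T-T_D)^2/T_D$ can still blow up when $T_D$ is tiny.

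To deal with this, I would split the cells with a threshold $\eta$ on $T_D$.
\begin{itemize}
\item On cells with $T_D\ge\eta$, the contribution is at most $(L\sqrt n\delta)^2\lambda(\mathcal{S})/\eta$.
\item On cells with $T_D<\eta$, we have $T\le \eta+L\sqrt n\delta$. Using the cellwise bound $\int_{\mathcal{S}_t} T\log(T/T_D)\le \int_{\mathcal{S}_t} T\log(T/\min_{\mathcal{S}_t}T)$ does not help near zeros. Instead I would bound the contribution by the log-sum/Jensen estimate $\int_{\mathcal{S}_t}T\log(T/T_D)\le p_t\log(\max_{\mathcal{S}_t}T/T_D)$. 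Alternatively, I would directly use that $\int_{\mathcal{S}_t}T\log T - p_t\log T_D$ is the local entropy gap, which is bounded by the oscillation of $\phi(x)=x\log x$ on $[0,\eta+L\sqrt n \delta]$. Since $\phi$ is uniformly continuous on a bounded interval, this gives a total contribution of at most $\lambda(\mathcal{S})\,\omega_\phi(2\eta)$, where $\omega_\phi$ is its modulus of continuity.
\end{itemize}
Choosing $\eta=\sqrt{\delta}$ drives both pieces to $0$.

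A cleaner route avoids the splitting entirely. Write $\operatorname{KL}(T\|T_D)=\int\phi(T)-\int\phi(T_D)$; this identity holds because $T_D$ is constant on each cell. Then apply uniform continuity of $\phi$ on $[0,\sup T]$ together with $\|T-T_D\|_\infty\to0$, which gives $|\int\phi(T)-\int\phi(T_D)|\le\lambda(\mathcal{S})\,\omega_\phi(\|T-T_D\|_\infty)\to0$. I would present this cleaner route.
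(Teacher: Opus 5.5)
Your inequality chain is the paper's argument: Lemma~\ref{lem:simplification_of_Delta} with nonnegativity of $\operatorname{KL}(T\|T_D)$, then~\eqref{eq:abstraction_bounds} with $\rho=p^{\mathrm{u}}$ via $p\in\mathbb{T}$. For $\lim\operatorname{KL}_D(p\|p^{\mathrm u})=\lim\underline{\operatorname{KL}}_D(\mathbb{T}\|p^{\mathrm u})$ you take~\eqref{eq:abstraction_bounds_convergence} as given, exactly as the paper does.

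You genuinely differ on the continuous-to-discrete limit $\operatorname{KL}(T\|U)=\lim\operatorname{KL}_D(p\|p^{\mathrm u})$. The paper's proof does not argue it separately. It relies on the computation in Section~\ref{sec:entropy_in_unc_spaces}, which replaces $p_t/\lambda(\mathcal{S}_t)$ by $T(s_t)$ inside the logarithm. That is a Riemann-sum heuristic: the ratio $p_t/(\lambda(\mathcal{S}_t)T(s_t))\to 1$ is not uniform where $T$ is small or vanishes, which is exactly the obstacle you identified. Your final route fixes this. You use the identity $\operatorname{KL}(T\|T_D)=\int_{\mathcal{S}}\phi(T)-\int_{\mathcal{S}}\phi(T_D)$, which holds because $T_D$ is constant on cells and $\int_{\mathcal{S}_t}T=\int_{\mathcal{S}_t}T_D=p_t$. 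You add that $T_D$ takes values in $[\min_{\mathcal{S}_t}T,\max_{\mathcal{S}_t}T]$ and that $\phi(x)=x\log x$ is uniformly continuous on $[0,\max T]$. Together these move the limit onto a function with no singularity, which makes the step rigorous.

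It also needs only continuity of $T$ on compact $\mathcal{S}$, not the gradient bound. That is an advantage: continuity plus an a.e.\ bounded gradient does not imply Lipschitz continuity (think of the Cantor function). So drop the $L\sqrt{n}\,\delta$ estimate and use the modulus of continuity of $T$ instead. The paper's route buys brevity and intuition; yours buys an actual proof, but without a rate. For a rate under Assumption~\ref{assumptions}, Lemma~\ref{lem:general_delta_upper_bound} is the relevant tool.

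Two small points. First, present only the final route; the $\chi^2$ and threshold-splitting detours are unnecessary. Second, state explicitly that the refinement drives the cell \emph{diameters} $\overline{\delta}$ to zero. The condition $\overline{\lambda}(\mathcal{X}_i)\to 0$ alone allows thin slabs, for which $T_D$ converges to a partially averaged density rather than $T$, and the limit fails. The paper makes the same assumption implicitly, for example with the uniform $N$-grid.
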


\textit{Proof:} The first (in)equality is found by combining the result in Lemma~\ref{lem:simplification_of_Delta} with the non-negativity of the KL divergence on its right-hand side, while the second follows from~\eqref{eq:abstraction_bounds_general}, see Fig.~\ref{fig:flow_chart_bounds_LB}. \qed

\begin{figure}[tb]
    \centering

    \def\svgwidth{0.495\textwidth}
    \input{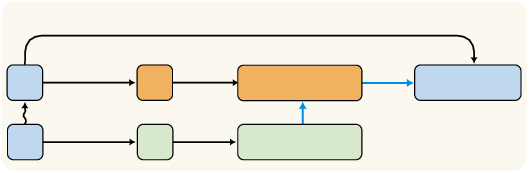}

    \caption{The objects that form a formal \textit{lower bound} on the KL divergence to uniform of system $\mathcal{M}$, i.e.,
    $\operatorname{KL}(T\| U)$, our metric total system entropy.}
    \label{fig:flow_chart_bounds_LB}
\end{figure}

We note that the value of $\underline{\operatorname{KL}}_{D}(\mathbb{T}\|p^{\mathrm{u}})$ and $\overline{\operatorname{KL}}_{D}(\mathbb{T}\|p^{\mathrm{u}})$ can be obtained through standard IMDP methods~\cite{Givan2000BoundedparameterMD}, when combined with the recursive nature of the KL divergence to uniform, which is straightforwardly obtained as an extension of the method in~\cite{vanZutphen2024PredictableIMDPs}, see Lemma~\ref{lem:recursive_dicrete_KL_to_uniform} in the appendix. 

Obtaining an abstraction-based \emph{upper bound} on the relative system entropy requires additional insight. We first present a general lemma that bounds from above the discrepancy between the KL divergence to uniform of a discretization $p$, w.r.t.\ that of its original continuous distribution $T$. As the lemma is on general continuous space distributions and their discretizations (and thus does not rely on the Markov property, or on abstractions), we believe it can prove useful in more general information-theoretic contexts. Besides the requirement that Assumption~\ref{assumptions} holds, we derive an analytic bound on this difference without requiring any additional information on the actual distribution density $T$. 

As each trajectory-space discretization bin $\mathcal{S}_t$ is hyperrectangular, we may describe them through their component intervals, as
\begin{equation}
    \mathcal{S}_t = [\alpha_1^{(t)},\beta_1^{(t)}]\times[\alpha_2^{(t)},\beta_2^{(t)}]\times\cdots\times [\alpha_n^{(t)},\beta_n^{(t)}],\label{eq:mathcalS_t_is_hyperrectangular}
\end{equation}
for all $t\in S$. Let the length of dimension $j\in\{1,2,\dots,n\}$ of element $\mathcal{S}_t$ for $t\in S$ be denoted $\delta_{j}^{(t)}:=\beta_{j}^{(t)}-\alpha_{j}^{(t)}$.

\begin{lemma}[Discretization Difference Upper-Bound]\label{lem:general_delta_upper_bound}
    For a probability distribution $T$ defined on a connected compact set $\mathcal{S}\subset\mathbb{R}^{n}$, subject to Assumption~\ref{assumptions}, and its discretization $p$~\eqref{eq:discretization_of_T}, over hyperrectangular partition $(\mathcal{S}_{t})_{t\in S}$, the difference between the KL divergence to uniform of $T$ and the KL divergence to uniform of its discretization $p$ is upper bounded by
    \[
    \operatorname{KL}(T\|U)-\operatorname{KL}_{D}(p\|p^{\mathrm{u}})\le\varepsilon(p,n,L,\overline{\delta}),
    \]
    where
    \begin{equation}
        \varepsilon(p,n,L,\overline{\delta}):=\sum_{t\in \{1,2,\dots,|p|\}}p_{t}\log\left(1+\frac{n}{2p_{t}}L\overline{\delta}^{n+1}\right),
    \end{equation}
    and $\overline{\delta}:=\max_{t\in S}\max_{j\in\{1,2,\dots,n\}}\delta_j^{(t)}$.
\end{lemma}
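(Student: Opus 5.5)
By Lemma~\ref{lem:simplification_of_Delta}, the quantity to bound is $\operatorname{KL}(T\|T_D)$. I would split this into a sum over bins,
\[
\operatorname{KL}(T\|T_D)=\sum_{t\in S}\int_{\mathcal{S}_t}T(s)\log\frac{T(s)\lambda(\mathcal{S}_t)}{p_t}~\mathrm{d}s ,
\]
and bound each term separately. Bins with $p_t=0$ contribute nothing, because then $T=0$ almost everywhere on $\mathcal{S}_t$. For a bin with $p_t>0$, write $\bar T_t:=p_t/\lambda(\mathcal{S}_t)$ for the mean value of $T$ on that bin. Since $\log$ is concave, Jensen's inequality applied with the probability measure $T(s)\,\mathrm{d}s/p_t$ on $\mathcal{S}_t$ gives
\[
\int_{\mathcal{S}_t}T\log\frac{T}{\bar T_t}\,\mathrm{d}s\le p_t\log\left(\frac{1}{p_t\bar T_t}\int_{\mathcal{S}_t}T(s)^2\,\mathrm{d}s\right).
\]
This already has the target form $p_t\log(1+\cdot)$. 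What remains is to show
\[
\frac{1}{p_t \bar T_t}\int_{\mathcal{S}_t}T^2\,\mathrm{d}s\le 1+\frac{n}{2p_t}L\overline{\delta}^{n+1}.
\]

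Using $\int_{\mathcal{S}_t} T^2=\int_{\mathcal{S}_t}(T-\bar T_t)^2+\lambda(\mathcal{S}_t)\bar T_t^2$ and $\lambda(\mathcal{S}_t)\bar T_t^2=p_t\bar T_t$, the ratio equals $1+\operatorname{Var}_t/(p_t\bar T_t)$, where $\operatorname{Var}_t:=\int_{\mathcal{S}_t}(T-\bar T_t)^2$. The target therefore reduces to
\[
\int_{\mathcal{S}_t}(T-\bar T_t)^2\,\mathrm{d}s\le \bar T_t\,\frac{n}{2}L\overline{\delta}^{n+1}.
\]
A cleaner route is to bound $\int_{\mathcal{S}_t}T(T-\bar T_t)\,\mathrm{d}s$, which is the same quantity. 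For this I would use $|T(s)-\bar T_t|\le\sup_{s'\in\mathcal{S}_t}|T(s)-T(s')|$. Since $T$ is continuous and $\|\nabla T\|_\infty\le L$ almost everywhere, integrating along a coordinate-wise path inside the hyperrectangle gives $|T(s)-T(s')|\le L\sum_j|s_j-s'_j|$. Taking the average over $s'$ instead of the supremum yields $|T(s)-\bar T_t|\le L\sum_j \delta_j^{(t)}/2$ at the centre, and up to $L\sum_j\delta_j^{(t)}$ at the corners. This produces
\[
\int_{\mathcal{S}_t}T\,(T-\bar T_t)\le \int_{\mathcal{S}_t}T\,|T-\bar T_t|\le p_t\cdot\frac{n}{2}L\overline{\delta}\quad\text{(heuristically)}.
\]

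This is where I expect the main obstacle: reconciling the constant and the powers of $\overline{\delta}$ with the stated form $\frac{n}{2p_t}L\overline{\delta}^{n+1}$. The stated form corresponds to the term $\int_{\mathcal{S}_t}T(T-\bar T_t)\,\mathrm{d}s/\bar T_t$ being bounded by $\frac{n}{2}L\,\overline{\delta}\,\lambda(\mathcal{S}_t)\le\frac n2L\overline{\delta}^{n+1}$. So I would instead use the form without dividing by $\bar T_t$:
\[
\frac{1}{p_t\bar T_t}\int T^2=\frac{1}{p_t}\int_{\mathcal{S}_t}T\cdot\frac{T}{\bar T_t}.
\]
Then I would bound $\int_{\mathcal{S}_t}(T-\bar T_t)^2$ via $\int_{\mathcal{S}_t}(T-\bar T_t)^2\le\int_{\mathcal{S}_t}(T-c)^2$ for the best constant $c$, together with $|T-c|\le\frac{L}{2}\sum_j\delta_j^{(t)}$ for $c$ equal to the midrange of $T$. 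This gives a bound of order $\lambda(\mathcal{S}_t)\,(\frac n2L\overline{\delta})^2$.

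To reach exactly one factor of $L\overline{\delta}$ together with $\lambda(\mathcal{S}_t)\le\overline{\delta}^n$, I would use the elementary inequality $\frac{(T-\bar T_t)^2}{\bar T_t}\le |T-\bar T_t|\cdot\frac{|T-\bar T_t|}{\bar T_t}$. If needed, I would fall back on the cruder pointwise inequality $T\log(T/\bar T_t)\le T-\bar T_t+\frac{(T-\bar T_t)^2}{\bar T_t}$, i.e.\ $\log x\le x-1$. The delicate point is controlling the factor $1/\bar T_t$ in bins where the density is small. I would try to absorb it by keeping the bound inside the logarithm, weighted by $p_t$, rather than linearizing. The final step is to sum over $t$, which gives $\varepsilon(p,n,L,\overline{\delta})$.
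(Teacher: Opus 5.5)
Your Jensen step is correct and is a different, slightly sharper, first reduction than the paper's. As written, however, the proposal does not close, and the one quantitative step it relies on is justified with the wrong constant.

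\textbf{The constant.} You claim that averaging the Lipschitz estimate over $s'$ gives $|T(s)-\bar T_t|\le \frac{L}{2}\sum_j\delta_j^{(t)}$ at the centre and up to $L\sum_j\delta_j^{(t)}$ at the corners. These are the values of the \emph{supremum} over $s'$, not of the average. With them you only get $\int_{\mathcal{S}_t}T|T-\bar T_t|\le p_t\,nL\overline{\delta}$, which is twice what you need. The correct averaging is
\[
T(c)-\bar T_t=\frac{1}{\lambda(\mathcal{S}_t)}\int_{\mathcal{S}_t}\bigl(T(c)-T(s)\bigr)\,\mathrm{d}s\le L\sum_{j=1}^{n}\frac{(c_j-\alpha_j^{(t)})^2+(\beta_j^{(t)}-c_j)^2}{2\delta_j^{(t)}}\le \frac{L}{2}\sum_{j=1}^{n}\delta_j^{(t)} .
\]
Here the centre value is $\frac{L}{4}\sum_j\delta_j^{(t)}$ and the corner value is $\frac{L}{2}\sum_j\delta_j^{(t)}$; the same bound holds for $\bar T_t-T(c)$.

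\textbf{The closure.} The ``main obstacle'' you identify is not actually there. Using only $T\ge 0$, you have $\operatorname{Var}_t=\int_{\mathcal{S}_t}T(T-\bar T_t)\,\mathrm{d}s\le p_t(\max_{\mathcal{S}_t}T-\bar T_t)\le p_t\frac{L}{2}\sum_j\delta_j^{(t)}$. Dividing by $p_t\bar T_t=p_t^2/\lambda(\mathcal{S}_t)$ gives
\[
\frac{\operatorname{Var}_t}{p_t\bar T_t}\le\frac{L\,\lambda(\mathcal{S}_t)}{2p_t}\sum_{j=1}^{n}\delta_j^{(t)}\le\frac{n}{2p_t}L\overline{\delta}^{\,n+1},
\]
which is exactly the target. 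The factor $1/\bar T_t$ becomes $\lambda(\mathcal{S}_t)/p_t$, and the resulting $1/p_t$ sits inside the logarithm just as in $\varepsilon$, so small-density bins cause no difficulty.

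\textbf{The detours.} Drop the alternatives in your last two paragraphs. The squared bound $\lambda(\mathcal{S}_t)(\frac n2 L\overline{\delta})^2$ yields a term of order $L^2\lambda(\mathcal{S}_t)^2/p_t^2$, which is the wrong form. Linearizing with $\log x\le x-1$ discards the logarithm you need.

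\textbf{Comparison with the paper.} The paper bounds $\log T(s)\le\log\max_{\mathcal{S}_t}T$ by monotonicity, reducing the difference to $\sum_t p_t\log(\max_{\mathcal{S}_t}T/\bar T_t)$. It then proves $\max_{\mathcal{S}_t}T\le\bar T_t+\frac L2\sum_j\delta_j^{(t)}$ through a relaxed variational problem: a perturbation argument around a tent function with slopes $\pm L$, followed by optimization over the peak location. Your intermediate quantity $p_t\log\bigl(\int_{\mathcal{S}_t}T^2/(p_t\bar T_t)\bigr)$ is never larger than the paper's, since $\int_{\mathcal{S}_t}T^2\le p_t\max_{\mathcal{S}_t}T$. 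After the step $\int T(T-\bar T_t)\le p_t(\max T-\bar T_t)$ the two routes coincide. Your averaging of the coordinate-wise Lipschitz estimate is a shorter proof of the same max-minus-mean bound than the paper's variational argument.
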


\textit{Proof:} See Appendix~\ref{apx:varepsilon_general}.\qed

Next, we specialize the result in Lemma~\ref{lem:general_delta_upper_bound} toward the construction of two distinct upper-bounding approaches. Firstly, a ``global'' approach, that requires minimal information on $\mathcal{M}$, and can be applied a-posteriori to correct more traditional abstraction methods~\cite{Givan2000BoundedparameterMD,vanZutphen2024PredictableIMDPs}, and recover formal upper-bound guarantees. Secondly, a more integrated approach that exploits the result from Lemma~\ref{lem:general_delta_upper_bound} in a ``local'' way, correcting the classic abstraction-method entropy recursion~\cite{vanZutphen2024PredictableIMDPs} at every time-step, thereby making more complete use of information from $\mathcal{M}$, and generally yielding a less conservative upper bound. We present these results as two complete abstraction-based IMC entropy upper-bounding theorems, as Thms.~\ref{thm:global_varepsilon} and~\ref{thm:local_varepsilon} (see Fig.~\ref{fig:flow_chart_bounds_UB}). 

\begin{theorem}[Global Discretization Difference Bound]\label{thm:global_varepsilon}
Under Assumption~\ref{assumptions}, we have
\[
\begin{aligned}
    &\operatorname{KL}(T\|U)&&\le&&\operatorname{KL}_{D}(p\|p^{\mathrm{u}})+\varepsilon(\mathbbm{1}|S|^{-1}\hspace{-0.5mm},n,L,\overline{\delta}) \\
    &&&&&\hspace{7mm}\le\hspace{3.4mm} \overline{\operatorname{KL}}_{D}(\mathbb{T}\|p^{\mathrm{u}})+\varepsilon(\mathbbm{1}|S|^{-1}\hspace{-0.5mm},n,L,\overline{\delta}),\\
\end{aligned}
\]
where $\mathbbm{1}|S|^{-1} := \begin{bmatrix}|S|^{-1} & \cdots & |S|^{-1}\end{bmatrix}^{\top}$, and thus
\[
\varepsilon(\mathbbm{1}|S|^{-1},n,L,\overline{\delta})=\log\left(1+\frac{n}{2}|S|L\overline{\delta}^{n+1}\right).
\]
This upper-bound on the discretization discrepancy further converges to zero for any discretization scheme satisfying 
\begin{equation}
    |S|\overline{\delta}^{n+1}\to0, \label{eq:convergence_requirement}
\end{equation}
yielding
\[
\lim_{\overline{\lambda}(\mathcal{X}_{i})\to 0}\overline{\operatorname{KL}}_{D}(\mathbb{T}\|p^{\mathrm{u}})+ \varepsilon(\mathbbm{1}|S|^{-1},n,L,\overline{\delta})=\operatorname{KL}(T\|U).
\]    
\end{theorem}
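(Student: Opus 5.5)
The plan is to reduce the theorem to Lemma~\ref{lem:general_delta_upper_bound} plus a concavity argument that removes the dependence of $\varepsilon$ on the unknown $p$. Lemma~\ref{lem:general_delta_upper_bound} gives $\operatorname{KL}(T\|U)\le \operatorname{KL}_{D}(p\|p^{\mathrm{u}})+\varepsilon(p,n,L,\overline{\delta})$. The first inequality of the theorem then follows once we show that $\varepsilon(p,n,L,\overline{\delta})\le \varepsilon(\mathbbm{1}|S|^{-1},n,L,\overline{\delta})$ for every $p\in\mathbb{P}^{|S|}$, i.e.\ that the uniform distribution maximizes $\varepsilon(\cdot,n,L,\overline{\delta})$ over the simplex.

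To prove this, set $c:=\tfrac{n}{2}L\overline{\delta}^{n+1}\ge 0$ and write
\[
\varepsilon(p,n,L,\overline{\delta})=\sum_{t\in S}p_t\log\!\left(1+\frac{c}{p_t}\right)=\sum_{t\in S}p_t\,\phi\!\left(\frac{1}{p_t}\right),
\]
with $\phi(y)=\log(1+cy)$ concave and, for $p_t=0$, the term taken as its limit $0$ (so those indices can be dropped). Applying Jensen's inequality with weights $p_t$ over the support $S_+$ of $p$ gives
\[
\sum_{t\in S_+}p_t\,\phi\!\left(\frac{1}{p_t}\right)\le \phi\!\left(\sum_{t\in S_+}p_t\cdot\frac{1}{p_t}\right)=\log\bigl(1+c|S_+|\bigr)\le \log\bigl(1+c|S|\bigr).
\]
A direct substitution shows that $\log(1+c|S|)$ is exactly $\varepsilon(\mathbbm{1}|S|^{-1},n,L,\overline{\delta})$, which yields the closed form stated in the theorem. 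The only delicate point here is handling zero entries of $p$ correctly, which the support restriction takes care of.

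The second inequality follows by adding the constant $\varepsilon(\mathbbm{1}|S|^{-1},n,L,\overline{\delta})$ to both sides of $\operatorname{KL}_{D}(p\|p^{\mathrm{u}})\le\overline{\operatorname{KL}}_{D}(\mathbb{T}\|p^{\mathrm{u}})$, which is \eqref{eq:abstraction_bounds} with $\rho=p^{\mathrm{u}}$. For convergence, condition \eqref{eq:convergence_requirement} gives $c|S|\to0$, so by continuity of the logarithm $\varepsilon(\mathbbm{1}|S|^{-1},n,L,\overline{\delta})\to 0$. Combining this with \eqref{eq:abstraction_bounds_convergence} and the limit in Theorem~\ref{thm:lower_bound}, we get $\lim \overline{\operatorname{KL}}_{D}(\mathbb{T}\|p^{\mathrm{u}})=\lim\operatorname{KL}_{D}(p\|p^{\mathrm{u}})=\operatorname{KL}(T\|U)$, and adding the vanishing correction proves the final limit statement.

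The main obstacle is reconciling the two limit notions: $\overline{\lambda}(\mathcal{X}_i)\to0$ in the stated limit versus the scaling condition \eqref{eq:convergence_requirement}. Note that $|S|\overline{\delta}^{n+1}$ behaves like $\overline{\delta}$ for quasi-uniform grids, since $|S|\approx\lambda(\mathcal{S})/\overline{\delta}^{\,n}$. The limit statement should therefore be read along refinement sequences satisfying \eqref{eq:convergence_requirement}, i.e.\ where the bins do not become too anisotropic. Everything else is routine.
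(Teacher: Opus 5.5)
Your proof is correct. Its skeleton matches the paper's: Lemma~\ref{lem:general_delta_upper_bound}, then a Jensen step that removes the unknown $p$ from $\varepsilon$, then \eqref{eq:abstraction_bounds} and a vanishing correction. The genuine difference is in how Jensen is applied. The paper first proves in Lemma~\ref{lem:concavity_f}, by a second-derivative computation, that $f(x)=x\log(1+c/x)$ is concave on $x\ge0$. It then applies Jensen with the \emph{uniform} weights $1/|S|$ at the points $p_t$, i.e., it shows that the symmetric concave function $p\mapsto\sum_t f(p_t)$ is maximized over the simplex at its barycenter $\mathbbm{1}|S|^{-1}$. You instead apply Jensen with the \emph{probability} weights $p_t$ to the evidently concave map $y\mapsto\log(1+cy)$ at the points $y_t=1/p_t$. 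The two are dual views of the same fact, since $f(x)=x\log(1+c\cdot(1/x))$ is the perspective of $y\mapsto\log(1+cy)$ evaluated at $y=1$.

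Your route buys three things. It is shorter, since no separate concavity lemma is needed. It handles zero-mass bins explicitly through the support $S_+$, rather than through the convention $f(0)=0$. And it gives the slightly sharper intermediate bound $\log(1+c|S_+|)$. The paper's route gives, as a by-product, concavity of $p\mapsto\varepsilon(p,n,L,\overline{\delta})$ on the whole simplex, which is a stronger structural statement than a bound on its maximum.

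You are also more explicit than the paper about the second inequality and about assembling the final limit from \eqref{eq:abstraction_bounds_convergence} and Theorem~\ref{thm:lower_bound}; the paper only argues that $\varepsilon\to0$. Your caveat is well founded. The condition $\overline{\lambda}(\mathcal{X}_i)\to0$ alone does not force $|S|\overline{\delta}^{n+1}\to0$: with ever-thinner slabs, $\overline{\delta}$ stays bounded away from zero. So the limit should be read along refinements satisfying \eqref{eq:convergence_requirement}, as in the paper's remark on uniform grids.
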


\textit{Proof:} See Appendix~\ref{apx:global_varepsilon}, where we derive both bound $\varepsilon(\mathbbm{1}|S|^{-1},n,L,\overline{\delta})$, and the condition on its convergence. \qed

\begin{remark}
The requirement for convergence~\eqref{eq:convergence_requirement} is not very restrictive, as it can be shown to be straightforwardly satisfied by, e.g., the simple grid-based discretization scheme that divides each dimension of $\mathcal{X}$ uniformly into $N$ regions, when $N\to\infty$. 
Let us confirm this by noting that under the aforementioned scheme, $|S|=|X|^{K+1}=N^{n_{x}(K+1)}$, $\overline{\delta}=\overline{\delta}_{x}/N$, where $\overline{\delta}_{x}=\max_{i\in\{1,2,\dots,n_{x}\}}b_i-a_i$, and lastly, $n=(K+1)n_{x}$, which can be substituted into the requirement~\eqref{eq:convergence_requirement} to yield $|S|\overline{\delta}^{n+1}=\overline{\delta}_{x}^{n_{x}(K+1)+1}/N$, which will always go to zero for $N\to\infty$, as $\overline{\delta}_{x}^{n_{x}(K+1)+1}$ is a constant.
\end{remark}

Let $\Phi : \mathbb{P}^{|X|} \times \mathbb{R}^{|X|}\to \mathbb{R}$, be defined as
\begin{equation}
\Phi(p^{\prime},V):= \sum_{j\in X}p^{\prime}_j\log \frac{p^{\prime}_j\lambda(\mathcal{X})}{\lambda(\mathcal{X}_j)}+ \sum_{j\in X}p^{\prime}_j V_j, \label{eq:Phi}
\end{equation}
and let $\Phi^{\varepsilon} : \mathbb{P}^{|X|} \times \mathbb{R}^{|X|}\to \mathbb{R}$, further be defined as 
\[
\begin{split}
\Phi^{\varepsilon}(p^{\prime},V):=\Phi(p^{\prime},V)+\varepsilon(p^{\prime},n_{x},L_{\nabla q},\overline{\delta}).
\end{split}
\]

\begin{theorem}[Local Discretization Difference Bound]\label{thm:local_varepsilon} Under Assumption~\ref{assumptions}, the continuous-trajectory KL divergence to uniform is bounded from above by
\[
\operatorname{KL}(T\|U)\le \operatorname{KL}_{D}^{\varepsilon}(\mathbb{T}\|p^{\mathrm{u}}),
\]   
where
\begin{subequations}\label{eq:local_epsilon_recursion}
\begin{equation}
\operatorname{KL}_{D}^{\varepsilon}(\mathbb{T}\|p^{\mathrm{u}}):= \Phi^{\varepsilon}(\pi,V_{0}^{\varepsilon}), \label{eq:local_epsilon_recursion_0}
\end{equation}
and, for $k=K-1,K-2,\dots,0$,
\begin{equation}
V_k^{\varepsilon}(i)=\max_{P_{k}(i)\in\mathbb{P}_{i}}\Phi^{\varepsilon}(P_{k}(i),V_{k+1}^{\varepsilon}),\label{eq:local_epsilon_recursion_k}
\end{equation}
\end{subequations}
initialized by $V_{K}^{\varepsilon}(i)=0$, for all $i\in X$. Furthermore, for grid-based discretizations satisfying $|S|\overline{\delta}^{n+1}\to 0$, we obtain convergence, as
\[
\lim_{\overline{\lambda}(\mathcal{X}_{i})\to 0}\operatorname{KL}_{D}^{\varepsilon}(\mathbb{T}\|p^{\mathrm{u}})=\operatorname{KL}(T\|U).
\]    
\end{theorem}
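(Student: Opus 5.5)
The plan is to combine a chain-rule decomposition of $\operatorname{KL}(T\|U)$ over time steps with Lemma~\ref{lem:general_delta_upper_bound}, applied \emph{per step} to the conditional one-step kernels on $\mathcal{X}$ (dimension $n_x$, gradient bound $L_{\nabla q}$), rather than once to the whole trajectory space. The argument has three steps.

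First, the continuous KL divergence to uniform decomposes recursively. Because $T$ factorizes as in~\eqref{eq:T} and $\lambda(\mathcal{S})=\lambda(\mathcal{X})^{K+1}$, we can write
\[
\operatorname{KL}(T\|U)=\operatorname{KL}(q_0\|U_{\mathcal{X}})+\sum_{k=1}^{K}\int q_{k-1}(x)\operatorname{KL}(q(x,\cdot)\|U_{\mathcal{X}})~\mathrm{d}x .
\]
Here $q_{k-1}$ is the state marginal and $U_{\mathcal{X}}=1/\lambda(\mathcal{X})$. Equivalently, define continuous value functions $W_K\equiv 0$ and
\[
W_{k}(x)=\operatorname{KL}(q(x,\cdot)\|U_{\mathcal{X}})+\int q(x,x')W_{k+1}(x')~\mathrm{d}x'.
\]
Then $\operatorname{KL}(T\|U)=\operatorname{KL}(q_0\|U_{\mathcal{X}})+\int q_0 W_0$. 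I expect to index this recursion so that it matches~\eqref{eq:local_epsilon_recursion}.

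Second, by backward induction on $k$, I will show that $W_k(x)\le V_k^{\varepsilon}(i)$ for all $x\in\mathcal{X}_i$. At step $k$, fix $x\in\mathcal{X}_i$ and let $p'_j=\int_{\mathcal{X}_j}q(x,x')~\mathrm{d}x'$, the discretization of $q(x,\cdot)$. Applying Lemma~\ref{lem:general_delta_upper_bound} on $\mathcal{X}$ gives
\[
\operatorname{KL}(q(x,\cdot)\|U_{\mathcal{X}})\le \sum_j p'_j\log\frac{p'_j\lambda(\mathcal{X})}{\lambda(\mathcal{X}_j)}+\varepsilon(p',n_x,L_{\nabla q},\overline{\delta}),
\]
which is legitimate because $\|\nabla_{x'}q\|\le L_{\nabla q}$ by Assumption~\ref{assumptions}. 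By the induction hypothesis, $\int q(x,x')W_{k+1}(x')~\mathrm{d}x'\le\sum_j p'_j V^{\varepsilon}_{k+1}(j)$. Together these give $W_k(x)\le\Phi^{\varepsilon}(p',V^{\varepsilon}_{k+1})$. By the definition of $\underline{P},\overline{P}$ in~\eqref{eq:abstract_IMC_definition}, we have $p'\in\mathbb{P}_i$, so this is at most the maximum $V_k^{\varepsilon}(i)$. The base step $k=K$ is trivial. The initial step is handled the same way with $q_0$ and $\pi$, yielding $\Phi^{\varepsilon}(\pi,V_0^{\varepsilon})$. One caveat: $\overline{\delta}$ here must denote the maximal side length of the state cells $\mathcal{X}_i$, which is at most the trajectory-level $\overline{\delta}$ since each $\mathcal{S}_t$ is a product of state cells.

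Third, for convergence, the lower bound of Theorem~\ref{thm:lower_bound} together with $\operatorname{KL}^{\varepsilon}_D\ge\underline{\operatorname{KL}}_D$ squeezes the result from below. For the upper side, I will bound $\operatorname{KL}_D^{\varepsilon}\le\overline{\operatorname{KL}}_D(\mathbb{T}\|p^{\mathrm{u}})+\sum_{k}\max_{p'}\varepsilon(p',n_x,L_{\nabla q},\overline{\delta})$. By concavity of $\varepsilon$ in $p'$ (Jensen on $\log$), each term is at most $\log(1+\tfrac{n_x}{2}|X|L_{\nabla q}\overline{\delta}^{n_x+1})$. This tends to zero under grid refinement, since $|X|\overline{\delta}^{n_x+1}\sim\overline{\delta}_x^{n_x+1}/N$. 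Finally, $\overline{\operatorname{KL}}_D$ converges to $\operatorname{KL}(T\|U)$ by~\eqref{eq:abstraction_bounds_convergence} and Theorem~\ref{thm:lower_bound}.

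The main obstacle is the interchange in the second step. The maximization of $\Phi^{\varepsilon}$ must dominate the pointwise continuous quantity, which requires $\varepsilon$ to be added inside the max and the realized $p'$ to lie in $\mathbb{P}_i$ for \emph{every} $x\in\mathcal{X}_i$. Getting the exact time-index alignment of the recursion with $\operatorname{KL}(T\|U)$ right is also delicate. The convergence claim additionally needs $\sum_k\max\varepsilon\to0$ uniformly in $p'$; Jensen's inequality handles this cleanly.
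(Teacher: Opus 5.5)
Your argument for the bound is correct and is essentially the paper's proof: the continuous recursion $\operatorname{KL}(T\|U)=\phi(q_0,\mathcal{V}_0)$ (your $W_k$), Lemma~\ref{lem:general_delta_upper_bound} applied per step to $q(x,\cdot)$ on $\mathcal{X}$ with $(n_x,L_{\nabla q})$, backward induction establishing $V_k^{\varepsilon}(i)\ge\sup_{x\in\mathcal{X}_i}\mathcal{V}_k(x)$ via $P_x\in\mathbb{P}_i$, and the same step for $(q_0,\pi)$. Your convergence argument is also sound and supplies a step the paper's appendix only asserts: it uses $\operatorname{KL}_D^{\varepsilon}(\mathbb{T}\|p^{\mathrm{u}})\le\overline{\operatorname{KL}}_D(\mathbb{T}\|p^{\mathrm{u}})+(K+1)\log\bigl(1+\tfrac{n_x}{2}|X|L_{\nabla q}\overline{\delta}^{\,n_x+1}\bigr)$, which follows from Jensen and the shift-invariance and monotonicity of $\Phi$ in $V$.
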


\textit{Proof:} See Appendix~\ref{apx:local_varepsilon}.\qed

\begin{figure}[tb]
    \centering

    \def\svgwidth{0.495\textwidth}
    \input{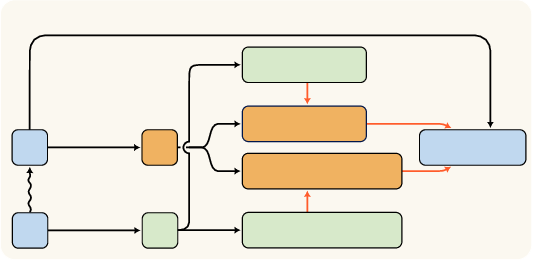}

    \caption{The objects that together form formal \textit{upper bounds} on the KL divergence to uniform of system $\mathcal{M}$, i.e., $\operatorname{KL}(T\| U)$, our metric total system entropy.}
    \label{fig:flow_chart_bounds_UB}
\end{figure}

\subsection{Considerations and algorithms}\label{sec:practical_tools}

The result in Thms.~\ref{thm:global_varepsilon} and~\ref{thm:local_varepsilon}, combined with the recursive algorithm for the KL divergence to uniform, see Lemma~\ref{lem:recursive_dicrete_KL_to_uniform} in the appendix, provides two practical approaches to obtain bounds on the KL divergence to uniform of $T$. Algorithm~\ref{alg:compute_bounds} computes both, yielding the values of bounds (i) $\underline{\operatorname{KL}}_{D}(\mathbb{T}\|U)$ ($\le \operatorname{KL}(T\|U)$), (ii) $\overline{\operatorname{KL}}_{D}(\mathbb{T}\|U)+\varepsilon(\mathbbm{1}|S|^{-1},n,L,\overline{\delta})$ ($\ge \operatorname{KL}(T\|U)$), and (iii) $\operatorname{KL}_{D}^{\varepsilon}(\mathbb{T}\|U)$ ($\ge \operatorname{KL}(T\|U)$). 

\begin{algorithm}[tb]
\caption{Obtaining the upper- and lower bounds}\label{alg:compute_bounds}
    Input: $L,L_{\nabla q},X,(\mathcal{X}_{i})_{i\in X},K,\pi,\underline{P},\overline{P}$.

    $n_{x}\Leftarrow \operatorname{dim}(\mathcal{X}_0)$,
    
    $\overline{\delta} \Leftarrow \max_{i\in X}\max_{j\in\{1,2,\dots,n_{x}\}}b_j^{(i)}-a_j^{(i)}$,
    
    $|S| \Leftarrow |X|^{K+1}$,  
    
    $\varepsilon \Leftarrow \log(1+\frac{1}{2}n|S|L\overline{\delta}^{n+1})$,

    $\mathbb{P}_i\Leftarrow \{p^{\prime}\in\mathbb{P}^{|X|} : \underline{P}_{ij}\le p^{\prime}_j\le\overline{P}_{ij}\}$, $\forall i\in X$.

    Value iterate to find upper- and lower-bounds:

    $\underline{V}_{K}(i)=\overline{V}_{K}(i)=V^{\varepsilon}_{K}(i)\Leftarrow 0, \ \forall i\in X$,
    
    \textbf{for} $k\in\{K-1,K-2,\dots,0\}$ \textbf{do}
    
        \hspace{5mm}\textbf{for} $i\in X$ \textbf{do} 
        
            \vspace{-4mm}
            \begin{align*}
            \underline{V}_k(i) &\Leftarrow \min_{p^{\prime}\in\mathbb{P}_i}\Phi(p^{\prime},\underline{V}_{k+1}),\hspace{3cm}\\ 
            \overline{V}_k(i) &\Leftarrow \max_{p^{\prime}\in\mathbb{P}_i}\Phi(p^{\prime},\overline{V}_{k+1}),\\ 
            V^{\varepsilon}_k(i) &\Leftarrow \max_{p^{\prime}\in\mathbb{P}_i}\Phi^{\varepsilon}(p^{\prime},V^{\varepsilon}_{k+1}),
            \end{align*}
            \vspace{-5mm}

        \hspace{5mm}\textbf{end}
        
    \textbf{end}
    
    Combine and return:
    
    $\underline{\operatorname{KL}}_{D}(\mathbb{T}\|p^{\mathrm{u}}) \Leftarrow \pi^{\top}\underline{V}_0$,
    
    $\overline{\operatorname{KL}}_{D}(\mathbb{T}\|p^{\mathrm{u}})+\varepsilon(\mathbbm{1}|S|^{-1},n,L,\overline{\delta}) \Leftarrow \pi^{\top}\overline{V}_0 + \varepsilon$,
    
    $\operatorname{KL}_{D}^{\varepsilon}(\mathbb{T}\|p^{\mathrm{u}}) \Leftarrow \pi^{\top}V^{\varepsilon}_0$.
\end{algorithm}

Lastly, Lemma~\ref{lem:bounding_the_gradient} aids in finding a bound $L$ on the gradient of the transition distribution density $T$ when having access to initial state distribution $q_{0}$ and the Markov kernel $q$.

\begin{lemma}[Bounding the magnitude of the gradient $T$]\label{lem:bounding_the_gradient}
    Given Assumption~\ref{assumptions} and a constant $L_{q}\in\mathbb{R}_{\ge 0 }$, such that 
    \begin{equation}\label{eq:bounds_on_markov_kernels}
        \|q\|_{\infty}\le L_{q}, \quad \|q_0\|_{\infty}\le L_{q},
    \end{equation}
    then 
    \[
    \|\nabla T\|_{\infty}\le 2(L_{q})^{K}L_{\nabla q}=:L.
    \]
\end{lemma}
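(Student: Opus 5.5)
Since $T(s)=q_0(x_0)\prod_{k=1}^{K}q(x_{k-1},x_k)$ by \eqref{eq:T}, the proof is a direct product-rule computation. The gradient $\nabla T$ is taken with respect to $s=(x_0,x_1,\dots,x_K)\in\mathbb{R}^{n}$. We interpret $\|\nabla T\|_\infty$ as the largest absolute value of any partial derivative $\partial T/\partial s_\ell$, matching the convention in Assumption~\ref{assumptions}, and we bound each such partial derivative uniformly almost everywhere.

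Fix a coordinate $s_\ell$, and let it belong to the block $x_m$ for some $m\in\{0,\dots,K\}$. The state $x_m$ appears in at most two factors of the product: in $q(x_{m-1},x_m)$ as the second argument when $m\ge1$, and in $q(x_m,x_{m+1})$ as the first argument when $m\le K-1$. When $m=0$, it appears in $q_0(x_0)$ and in $q(x_0,x_1)$. In every case at most two factors depend on $s_\ell$. On the full-measure set where $q_0$ and $q$ are differentiable (Assumption~\ref{assumptions}), the product rule gives
\[
\frac{\partial T}{\partial s_\ell}(s)=\sum_{f\in\mathcal{F}_m}\frac{\partial f}{\partial s_\ell}\prod_{g\ne f}g,
\]
where $\mathcal{F}_m$ is the set of (at most two) factors depending on $x_m$, and the product runs over the remaining $K$ factors of the $K+1$ factors $q_0,q,\dots,q$. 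Each derivative is bounded by $L_{\nabla q}$, since a partial derivative with respect to one argument of $q$ is a component of $\nabla q$. Each of the remaining $K$ factors is bounded by $L_q$ by \eqref{eq:bounds_on_markov_kernels}. Hence $|\partial T/\partial s_\ell|\le 2L_q^{K}L_{\nabla q}$, and taking the maximum over $\ell$ gives the claim.

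The main technical point is making the almost-everywhere statement rigorous. The set where some factor fails to be differentiable is a preimage, under a coordinate projection $s\mapsto(x_{k-1},x_k)$, of a null set in $\mathbb{R}^{2n_x}$. Such a preimage is a null set crossed with a product of the remaining coordinates, so it has Lebesgue measure zero in $\mathbb{R}^n$ by Fubini. A finite union of these sets remains null, so the product rule applies almost everywhere on $\mathcal{S}$. The other subtlety is the endpoint cases $m=0$ and $m=K$: at $m=K$ only one factor depends on $x_K$, so the bound is even smaller, and at $m=0$ the use of $q_0$ is covered by the shared bounds on $q_0$ in Assumption~\ref{assumptions} and \eqref{eq:bounds_on_markov_kernels}.
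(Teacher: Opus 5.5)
Your proof is correct and follows the same route as the paper: a direct product-rule computation in which each block $x_m$ enters at most two of the $K+1$ factors, and each resulting term is bounded by $L_{\nabla q}L_q^{K}$. Your endpoint handling is also more accurate than the paper's, which claims a single-term bound $(L_q)^K L_{\nabla q}$ at $k=0$ even though $x_0$ enters both $q_0(x_0)$ and $q(x_0,x_1)$; your two-term count there is correct, and the final bound $2(L_q)^K L_{\nabla q}$ is unaffected.
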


\textit{Proof:} See Appendix~\ref{apx:bounding_the_gradient_proof}. \qed

\section{Formal Entropy-Regularized Control} \label{sec:generalizing_to_control} 

Using the theory developed above for formal entropy abstractions, we are now able to construct a formal-abstractions \emph{controller-synthesis} method. Here, we first introduce Markov decision process (MDP) notation and then present a generalized abstraction-based controller-synthesis method that enables the trading off of system entropy (quantified by the KL divergence to uniform of its trajectory distribution) with cumulative cost, with formal guarantees carrying over to the original continuous-state system.

\subsection{Basic preliminaries}

Let a continuous-state MDP be parametrized as
\[
\mathcal{D}=(\mathcal{X},\tilde{U},q_0,[q^{u}]_{u\in \tilde{U}},g,K),
\]
with $\mathcal{X}$, $q_0$, and $K$ as above, $\tilde{U}$ a finite set of actions, $[q^{u}]_{u\in \tilde{U}}$ a finite set of Markov kernels indexed by $u\in \tilde{U}$, and a stage-cost function $g:\mathcal{X}\times \tilde{U}\rightarrow\mathbb{R}$ (time dependence of $g$ is omitted without loss of generality). Let a policy on the discretized state-space $X$ be a function $\mu : \{0,1,\dots,K-1\}\times X\to \tilde{U}$. Let the trajectory distribution density $T^{\mu}$ induced by $\mathcal{D}$ subject to a policy $\mu$ be defined as
\[
T^{\mu}(s) = q_0(x_0)\prod_{k=0}^{K-1}q^{\mu_{k}(\mathfrak{i}(x_{k}))}(x_{k},x_{k+1}),
\]
where $s=x_0,\dots,x_{K}$ is a trajectory in $\mathcal{S}:=\mathcal{X}^{K+1}$, and $\mathfrak{i}(x_{k})$ returns the discrete state $i\in X$ such that $x_{k}\in\mathcal{X}_{i}$.

We aim to minimize a standard expected cumulative stage cost, regularized by trajectory entropy, as quantified by the KL divergence to uniform of $T^{\mu}$. We consider a positive regularization term here, thus corresponding to a \textit{penalization of predictable behavior} when minimizing the cost function, but the method generalizes straightforwardly to predictability encouragement. Specifically, we consider the finite-horizon entropy-regularized objective
\begin{equation}
\underbrace{\mathbb{E}\left[\sum_{k=0}^{K-1}g(x_{k},\mu_{k}(\mathfrak{i}(x_{k})))\right]}_{J^{\mu}} + \operatorname{KL}(T^{\mu}\|U).
\label{eq:entropy_regularized_objective}
\end{equation}
An explicit regularization weight factor has been omitted for notational simplicity.

We model the abstraction as a finite-state MDP on state space $X$, action space $\tilde{U}$, with uncertain transition probabilities; commonly referred to as an interval MDP (IMDP). For each state-action pair $(i,u)\in X\times \tilde{U}$, the transition kernel is constrained to lie in
\begin{equation}
P(i,u)\in\mathbb{P}_{i}(u):=\{p^{\prime}\in\mathbb{P}^{|X|}\mid \underline{P}_{ij}(u)\le p^{\prime}_{j}\le \overline{P}_{ij}(u)\},
\label{eq:mathbbPi_u}
\end{equation}
where
\begin{equation}
\begin{split}
&\underline{P}_{ij}(u) = \min_{x\in \mathcal{X}_i}\int_{\mathcal{X}_j}q^{u}(x,x^{\prime})~\mathrm{d}x^{\prime},\\
&\overline{P}_{ij}(u) = \max_{x\in \mathcal{X}_i}\int_{\mathcal{X}_j}q^{u}(x,x^{\prime})~\mathrm{d}x^{\prime}, 
\label{eq:interval_transitions_control}
\end{split}
\end{equation}
for all $i,j\in X$, $u\in\tilde{U}$. We further define conservative abstract stage costs~\cite{Givan2000BoundedparameterMD}, as
\begin{equation}
\overline{g}(i,u)=\max_{x\in\mathcal{X}_{i}}g(x,u), \ \ \ \ \underline{g}(i,u)=\min_{x\in\mathcal{X}_{i}}g(x,u),
\label{eq:abstract_costs}
\end{equation}
for all $i\in X$, and all $u\in \tilde{U}$.

We synthesize policies by optimizing (minimizing) the two upper bounds associated with the techniques from Thms.~\ref{thm:global_varepsilon} and~\ref{thm:local_varepsilon}, respectively, i.e., our approach yields two bound-optimizing policies and their associated upper-bound values. We additionally compute the associated lower bound for each policy, according to the result from Thm.~\ref{thm:lower_bound}, which bound the performance of the actual continuous system, under the same policy.

\subsection{Formal entropy-regularized bound optimization}

We equip our upper-bound entropy recursion framework (Alg.~\ref{alg:compute_bounds}) with standard IMDP robust dynamic programming for stage costs. We compute (i) a policy $\mu$ that minimizes the \emph{globally-corrected} upper bound (cf.\ Thm.~\ref{thm:global_varepsilon}), and (ii) a policy $\mu^{\varepsilon}$ that minimizes the \emph{locally-corrected} upper bound (cf.\ Thm.~\ref{thm:local_varepsilon}). Both are computed by backward dynamic programming on the abstraction, as described in Alg.~\ref{alg:compute_bounding_policies}.

\begin{algorithm}[!t]
\caption{Computing the formal upper-bound minimizing control policies}\label{alg:compute_bounding_policies}
    Input: $L,L_{\nabla q},X,(\mathcal{X}_{i})_{i\in X},K,\pi,\underline{P},\overline{P}$.

    $n_{x}\Leftarrow \operatorname{dim}(\mathcal{X}_0)$,
    
    $\overline{\delta} \Leftarrow \max_{i\in X}\max_{j\in\{1,2,\dots,n_{x}\}}b_j^{(i)}-a_j^{(i)}$,
    
    $|S| \Leftarrow |X|^{K+1}$,  
    
    $\varepsilon \Leftarrow \log(1+\frac{1}{2}n|S|L\overline{\delta}^{n+1})$,

    $\mathbb{P}_i(u)\hspace{-1mm}\Leftarrow \hspace{-1mm}\{p^{\prime}\in\mathbb{P}^{|X|} : \underline{P}_{ij}(u)\le p^{\prime}_j\le\overline{P}_{ij}(u)\}$, $\forall i\in X, u\in \tilde{U}$.

    $\overline{g}(i,u)\Leftarrow \max_{x\in\mathcal{X}_{i}}g(x,u)$, for all $u\in \tilde{U}$, all $i\in X$,
    
    $\underline{g}(i,u)\Leftarrow \min_{x\in\mathcal{X}_{i}}g(x,u)$, for all $u\in \tilde{U}$, all $i\in X$,

    Value iterate to find upper- and lower-bounds:

    $\underline{V}_{K}(i)=\overline{V}_{K}(i)=\underline{V}^{\varepsilon}_{K}(i)=\overline{V}^{\varepsilon}_{K}(i) \Leftarrow 0, \ \forall i\in X$,
    
    \textbf{for} {$k\in\{K-1,K-2,\dots,0\}$} \textbf{do}

        \hspace{5mm} \textbf{for} {$i\in X$} \textbf{do}
        
            \vspace{-4mm}
            \begin{align*}
            \mu_k(i) &\Leftarrow \arg\min_{u\in \tilde{U}} \max_{p^{\prime}\in\mathbb{P}_i(u)}\overline{g}(i,u)+\Phi(p^{\prime},\overline{V}_{k+1}),\\ 
            \mu^{\varepsilon}_k(i) &\Leftarrow \arg\min_{u\in \tilde{U}} \max_{p^{\prime}\in\mathbb{P}_i(u)}\overline{g}(i,u)+\Phi^{\varepsilon}(p^{\prime},V^{\varepsilon}_{k+1}),\\ 
            \underline{V}_k(i) &\Leftarrow \min_{p^{\prime}\in\mathbb{P}_i(\mu_k(i))}\underline{g}(i,\mu_{k}(i))+\Phi(p^{\prime},\underline{V}_{k+1}),\\ 
            \underline{V}^{\varepsilon}_k(i) &\Leftarrow \min_{p^{\prime}\in\mathbb{P}_i(\mu^{\varepsilon}_k(i))}\underline{g}(i,\mu_{k}^{\varepsilon}(i))+\Phi(p^{\prime},\underline{V}^{\varepsilon}_{k+1}),\\ 
            \overline{V}_k(i) &\Leftarrow \max_{p^{\prime}\in\mathbb{P}_i(\mu_k(i))}\overline{g}(i,\mu_{k}(i))+\Phi(p^{\prime},\overline{V}_{k+1}),\\ 
            \overline{V}^{\varepsilon}_k(i) &\Leftarrow \max_{p^{\prime}\in\mathbb{P}_i(\mu^{\varepsilon}_k(i))}\overline{g}(i,\mu_{k}^{\varepsilon}(i))+\Phi^{\varepsilon}(p^{\prime},\overline{V}^{\varepsilon}_{k+1}),\\ 
            \end{align*}
            \vspace{-13mm}

        \hspace{5mm} \textbf{end}

    \textbf{end}

    Combine and return:
    
    $\underline{J}^{\mu}+\underline{\operatorname{KL}}_{D}(\mathbb{T}^{\mu}\|U) \Leftarrow \pi^{\top}\underline{V}_0$,

    $\underline{J}^{\mu^{\varepsilon}}+\underline{\operatorname{KL}}_{D}(\mathbb{T}^{\mu^{\varepsilon}}\|U) \Leftarrow \pi^{\top}\underline{V}^{\varepsilon}_0$,
    
    $\overline{J}^{\mu}+\overline{\operatorname{KL}}_{D}(\mathbb{T}^{\mu}\|U)+\varepsilon(\mathbbm{1}|S|^{-1},n,L,\overline{\delta}) \Leftarrow \pi^{\top}\overline{V}_0 + \varepsilon$,
    
    $\overline{J}^{\mu^{\varepsilon}}+\operatorname{KL}_{D}^{\varepsilon}(\mathbb{T}^{\mu^{\varepsilon}}\|U) \Leftarrow \pi^{\top}\overline{V}^{\varepsilon}_0$.
\end{algorithm}

Modifying the controller synthesis algorithm to minimize the upper bounds (in contrast to its current functioning as a lower-bound maximizer) is straightforwardly achieved by exchanging the appropriate maximizations by minimizations and vice-versa.

\begin{remark}\label{rem:convex_max_over_polytope}
Note that the maximizations over $p^{\prime}\in\mathbb{P}_i(u)$ in Alg.~\ref{alg:compute_bounding_policies} maximize convex objectives over polytopes; a concave problem. This can thus equivalently be formulated as a finite optimization over extreme points, similar to the cost-only IMDP setting~\cite{Givan2000BoundedparameterMD,vanZutphen2024PredictableIMDPs}, guaranteeing globally optimal solutions. In contrast, in the \textit{predictability encouraging} variation of our algorithm, where the KL divergence to uniform has a negative sign, yields a convex problem; allowing efficient numerical solutions.
\end{remark}

The resulting policies thus minimize the guaranteed KL divergence to uniform versus stage-cost upper bounds when applied to the original, continuous-state system, as
\[
\begin{split}
\mu&=\arg\min_{\mu^{\prime}}\Bigg\{\overline{J}^{\mu^{\prime}}+\overline{\operatorname{KL}}_{D}(\mathbb{T}^{\mu^{\prime}}\|p^{\mathrm{u}})+\varepsilon(\mathbbm{1}|S|^{-1},n,L,\overline{\delta})\Bigg\}, \\ 
\mu^{\varepsilon}&=\arg\min_{\mu^{\prime}} \Bigg\{\overline{J}^{\mu^{\prime}}+\operatorname{KL}_{D}^{\varepsilon}(\mathbb{T}^{\mu^{\prime}}\|p^{\mathrm{u}})\Bigg\},
\end{split}
\]
where 
\begin{equation}
\overline{J}^{\mu}:=\mathbb{E}\Bigg[\sum_{k=0}^{K-1}\overline{g}(\mathfrak{i}(x_{k}),\mu_{k}(\mathfrak{i}(x_{k})))\Bigg]. \label{eq:overline_J_mu}
\end{equation}
Moreover, the resulting bounds apply to the original continuous-state system, as
\[
\begin{split}
&\underline{J}^{\mu}+\underline{\operatorname{KL}}_{D}(\mathbb{T}^{\mu}\|U)\le J^{\mu}+\operatorname{KL}(T^{\mu}\|U)  \\
&\hspace{2.5cm}\le\overline{J}^{\mu}+\overline{\operatorname{KL}}_{D}(\mathbb{T}^{\mu}\|U)+\varepsilon(\mathbbm{1}|S|^{-1},n,L,\overline{\delta}),
\end{split}
\]
for optimal policy $\mu$, and
\[
\begin{split}
&\underline{J}^{\mu^{\varepsilon}}+\underline{\operatorname{KL}}_{D}(\mathbb{T}^{\mu^{\varepsilon}}\|U)\le J^{\mu^{\varepsilon}}+\operatorname{KL}(T^{\mu^{\varepsilon}}\|U)  \\
&\hspace{5.1cm}\le\overline{J}^{\mu^{\varepsilon}}+\operatorname{KL}_{D}^{\varepsilon}(\mathbb{T}^{\mu^{\varepsilon}}\|U),
\end{split}
\]
for optimal policy $\mu^{\varepsilon}$, where $\underline{J}^{\mu}$ refers to~\eqref{eq:overline_J_mu} after replacing the stage-cost upper bound $\overline{g}$ by the lower bound $\underline{g}$.

\section{Examples} \label{sec:examples} 

Below, simplified versions of real-world systems are used to demonstrate the behavior of entropy quantification and entropy-regulated controller synthesis through abstractions.

\subsection{Markov Chain Entropy Bounds}

In this section, we perform a numerical study on the convergence behavior of our abstraction-based entropy-bounds method. Alg.~\ref{alg:compute_bounds} is applied to a heavily simplified, though highly relevant example of a multi-dimensional (clipped) Gaussian transition model.

The exact system we aim to investigate is described by
\begin{equation}
x_{k+1}=x_{k}+w_{k}, \qquad w_{k}\sim\mathcal{N}(0,\Sigma), \label{eq:dynamics_Ex1}
\end{equation}
where $x_k\in\mathcal{X}\subset\mathbb{R}^{n_{x}}$, $\Sigma\in\mathbb{R}^{n_{x}\times n_{x}}$, on the compact hyper-rectangular state space
\[
\mathcal{X}=[a_1,b_1]\times[a_2,b_2]\times\cdots\times[a_{n_{x}},b_{n_{x}}],
\]
coupled with a uniformly distributed placement of $x_{k+1}$ back onto $\mathcal{X}$ whenever the outcome of $x_{k}+w_{k}\notin\mathcal{X}$.

Denoting the probability mass that stochastic system~\eqref{eq:dynamics_Ex1} places outside of the state space $\mathcal{X}$ when in state $x\in\mathcal{X}$, by $o(x,\Sigma)=1-\int_{\mathcal{X}} \mathcal{N}(x^{\prime};x,\Sigma)~\mathrm{d}x^{\prime}$, allows us to describe its Markov kernel $q(x,x^{\prime})$ analytically, as
\[
q(x,x^{\prime}) = \begin{cases}\mathcal{N}(x^{\prime};x,\Sigma) + o(x,\Sigma)/\lambda(\mathcal{X}), & \text{for }x^{\prime}\in\mathcal{X}, \\ 
0 & \text{otherwise},
\end{cases}
\]
where $\lambda(\mathcal{X})=\prod_{j=1}^{n_{x}}(b_j-a_j)$ denotes the Lebesgue measure of $\mathcal{X}$. Further assuming a Gaussian initial state distribution parameterized by $\hat{x}_{0}$, $\Sigma_0$, we have 
\[
q_0(x_0)=\mathcal{N}(x_0;\hat{x}_0,\Sigma_0)+o(\hat{x}_{0},\Sigma_{0})/\lambda(\mathcal{X}), \ \ \ \text{for } x_0\in\mathcal{X}.
\]

While for this specific scenario, expressions for $L_{q}$ and $L_{\nabla q}$ can be obtained analytically, we instead opt to obtain these values numerically in our implementation, emphasizing how this step is tractable even for more more exotic distributions.

Applying the entropy abstraction method to this continuous-state MC for $K=4$, $n_{x}=2$, $a_j=0$, $b_j=1$, for $j\in\{1,2\}$, subject to the aforementioned uniform discretization of each dimension of $\mathcal{X}$ into $N$ regions, yields convergent bounds as displayed in Fig.~\ref{fig:abstraction_bounds_vs_N}. Note that for $N\in \{2,3,4\}$, we have executed Alg.~\ref{alg:compute_bounds} both using the exact maximization (over all extreme points of $\mathbb{P}_i(u)$, see Remark~\ref{rem:convex_max_over_polytope}) and a more efficient numerical alternative. As the number of extreme points in polytopes $\mathbb{P}_{i}(u)$ scales exponentially with $N$, this approach becomes intractable for larger $N$. As the numerical solution agrees with the exact solution at $N\in\{2,3,4\}$, we trust the accuracy of its results for $N\ge 5$.

\begin{figure}[tb]
    \centering
    \includegraphics[width=0.5\linewidth]{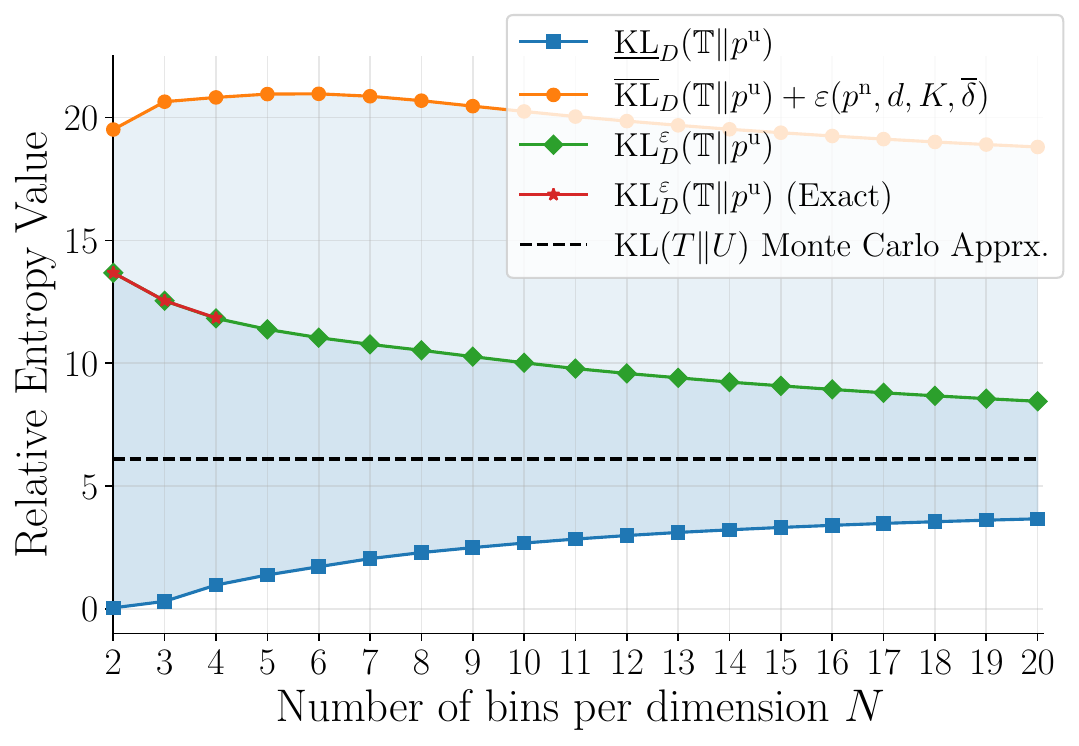}
    \caption{KL divergence to uniform abstraction bounds as a function of discretization resolution as expressed in the number $N$ of equal subdivisions for each dimension.}
    \label{fig:abstraction_bounds_vs_N}
\end{figure}


\subsection{Markov Decision Process Controller Synthesis}

In this section, we balance a stage-cost minimization with system entropy minimization in an autonomous driving example where both speed and predictable behavior are desired. Flipping the regularization term sign, predictability can be encouraged by Alg.~\ref{alg:compute_bounding_policies}. As minimizing the upper bound on the negative KL to uniform regularization term (maximizing predictability) is equivalent to maximizing the lower bound on the original term, the additional mechanics introduced in Thms.~\ref{thm:global_varepsilon} and~\ref{thm:local_varepsilon} are not involved in policy optimization. Only if we are interested in obtaining a performance lower bound on the resulting policy, do we require their application.

We consider a simplified minimum time versus minimum entropy rough terrain descent, where higher velocities are associated with more unpredictable behavior. Let state $v_{k}\in \mathcal{X}:=[0,1]$ describe the velocity of the vehicle, input $u_{k}\in U:=\{0,5,10,15,20\}$ describe an additive acceleration, and let the velocity-dependent stochasticity be modeled as another additive (stochastic) acceleration $w_{k}\in[-1,1]$, for times $k\in\{0,1,\dots,K\}$. Stage costs $g(v_k,u_k)=-v_k$ decrease with increased velocity, while the level of unpredictability increases with increased velocity, as the system dynamics follow
\[
v_{k+1} = 0.8v_{k} + 0.01u_{k} + w_{k}, \ \ \ w_{k}\sim d(~\cdot~  |v_{k}),
\]
where we use the three left-, middle-, and right points
\[
\begin{bmatrix}
    l(v) \\
    m(v) \\
    r(v)
\end{bmatrix}=(1-v)\begin{bmatrix}
    0 \\
    0.05 \\
    0.1
\end{bmatrix} + v\begin{bmatrix}
    -0.8 \\
    -0.2 \\
    0
\end{bmatrix},
\]
to define the triangular probability density of $w_{k}$ as a function of $v_{k}$, as
\[
d(\omega|\nu)=\frac{2}{r(\nu)-l(\nu)}\begin{cases}
    \frac{\omega - l(\nu)}{m(\nu)-l(\nu)}, &\text{if }\omega\in[l(\nu),m(\nu)], \\
    \frac{r(\nu)-\omega}{r(\nu)-m(\nu)}, &\text{if }\omega\in[m(\nu),r(\nu)], \\
    0, & \text{otherwise},
    \end{cases}
\]
such that
\[
\operatorname{Prob}(w_{k}\in A |v_{k}=\nu) = \int_{A}d(\omega|\nu)~\mathrm{d}\omega.
\]

The functions $(l,m,r)$ widen the support of $w_k$ and cause it to shift downwards as velocity increases, modeling more unpredictable disturbances at higher speeds while ensuring the closed-loop dynamics satisfy $v_{k+1}\in[0,1]$ almost surely for all $v_k\in[0,1]$, and all $u_k\in U$. By minimizing the upper bound on the entropy-regularized control performance, we obtain policies that are guaranteed to meet or exceed the associated level of performance, ensuring that the controlled system meets certain safety specs for example.

We partition the state space $\mathcal{X}$ uniformly into $N=80$ regions to yield an IMDP abstraction according to~\eqref{eq:interval_transitions_control}. Solving the resulting robust dynamic program (Alg.~\ref{alg:compute_bounding_policies}) produces a policy that trades off speed against unpredictability. To illustrate the effect of regularization on the system trajectories, we scale cost function $g$ by $\varphi\in\{2.3,2.56\}$, and also compute a purely minimum-time policy $\mu_{\text{DP}}$ (using standard IMDP stage cost bound minimization~\cite{Givan2000BoundedparameterMD} without entropy regularization).

The simulation results are visualized in Fig.~\ref{fig:entropy_bound_minimizing_control}, where the variance of the trajectory under $\mu(\varphi=2.56)$ is seemingly larger than the others, but this is due to a rough bifurcation of trajectories, either going up or down all the way. The associated KL divergence to uniform of the three policies is found though Monte Carlo simulation as
\[
\begin{aligned}
& \operatorname{KL}(T^{\mu_{\text{DP}}}\|U) =  19.2, \\ 
& \operatorname{KL}(T^{\mu(\varphi=2.56)}\|U) =  25.3, \\ 
& \operatorname{KL}(T^{\mu(\varphi=2.3)}\|U) =  32.1, 
\end{aligned}
\]
showing the clear effect of increasing the relative level of entropy regularization. Increased emphasis on entropy minimization results in an avoidance of the high-velocity regime to prevent increased unpredictability, instead favoring moderate speeds where the disturbance distribution is narrower. In contrast, the minimum-time controller purely drives the system toward high velocities, resulting in significantly higher trajectory entropy.

The bounds we obtain are further 
\[
-36.3 \le J^{\mu(\varphi=2.56)}-\operatorname{KL}(T^{\mu(\varphi=2.56)}\|U) \le -34.8,
\]
when we scale $g$ by $\varphi = 2.56$, and
\[
-36.2\le 
J^{\mu(\varphi=2.3)}-\operatorname{KL}(T^{\mu(\varphi=2.3)}\|U)\le  -34.3,
\]
when $g$ is scaled by $\varphi = 2.3$, where $J^{\mu}$, for any policy $\mu$, is defined as~\eqref{eq:entropy_regularized_objective}. Both approaches thus achieve a bound gap of only 5\% of the total objective in this example.

\begin{figure}[tb]
    \centering
    \includegraphics[width=0.5\linewidth]{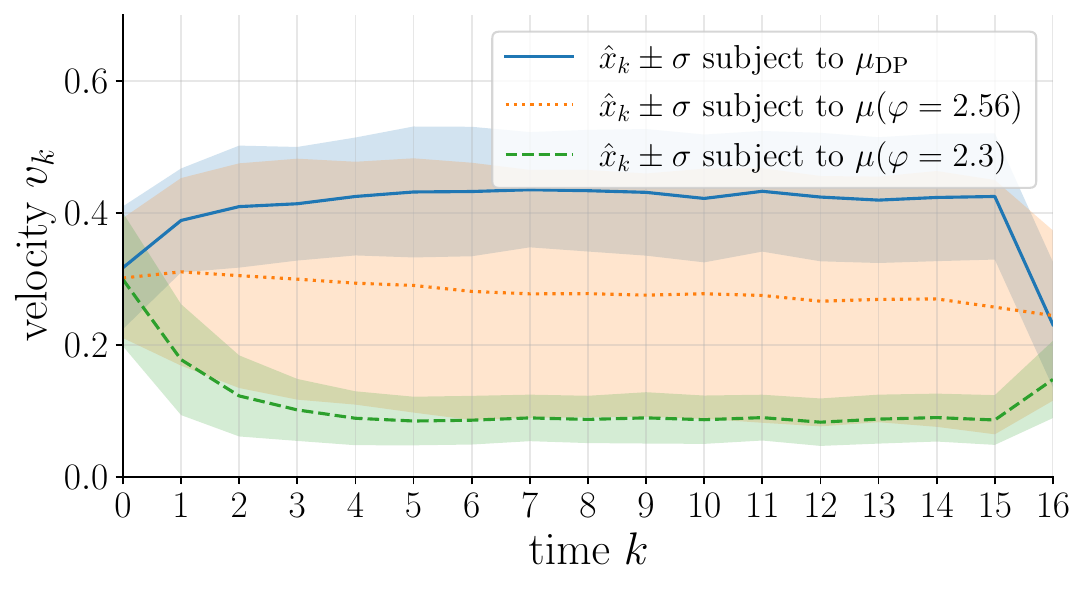}
    \caption{Behavior of the bumpy-hill AV example system under the unregularized policy $\mu_{\text{DP}}$, the entropy-regularized policy with global $\varepsilon$ correction $\mu$, and the locally-corrected entropy-regularized policy $\mu^{\varepsilon}$.}
    \label{fig:entropy_bound_minimizing_control}
\end{figure}



\section{Conclusions and Discussion} \label{sec:conc_and_discussion}

In this paper, we develop a formal abstraction framework for analyzing and controlling trajectory entropy in continuous-state stochastic systems. We derive computable upper and lower bounds on the KL divergence to uniform of trajectory distributions using interval MDP abstractions, all of which convergence to the true KL divergence to uniform as the discretization resolution is increased. The resulting theory enables formal entropy-aware controller synthesis that trades predictability against control performance while preserving formal guarantees for the original continuous system. Numerical studies confirm convergence of the proposed bounds with discretization refinement, and that the synthesized policies are able to effectively regulate entropy in practice. Future work will focus on further reducing conservatism of abstraction-based bounds, and extending the approach to the infinite horizon setting, richer specifications, and learning-based system models.

\bibliographystyle{IEEEtran}
\bibliography{mybibliography}

\appendix

\part*{Appendix}

\section{Entropy abstraction lower bound (Lemma~\ref{lem:simplification_of_Delta})}\label{apx:simplification_of_Delta}

Writing the KL divergence to uniform of $T$~\eqref{eq:T_to_uniform} as a sum over the intervals $\mathcal{S}_{t}$, as
\[
\operatorname{KL}(T\|U) = \sum_{t\in S}\int_{\mathcal{S}_{t}}T(s)\log(T(s)\lambda(\mathcal{S}))~\mathrm{d}s,
\]
when combined with the explicit expression for the KL divergence to uniform of $p$~\eqref{eq:p_to_uniform}, and identity~\eqref{eq:discretization_of_T}, allows the discretization difference~\eqref{eq:discretization_differnce_as_single_KL} to be formulated as a single sum over $t\in S$, as
\begin{equation}
\begin{split}
&\operatorname{KL}(T\|U)-\operatorname{KL}_D(p\|p^{\text{u}})=\\
&\hspace{10mm}\sum_{t\in S}\int_{\mathcal{S}_{t}}T(s)\left(\log(T(s)\lambda(\mathcal{S}))-\log\left(\frac{p_{t}\lambda(\mathcal{S})}{\lambda(\mathcal{S}_{t})}\right)\right)~\mathrm{d}s, 
\label{eq:main}
\end{split}
\end{equation}
where we have used the fact that $\log\left(\frac{p_{t}\lambda(\mathcal{S})}{\lambda(\mathcal{S}_{t})}\right)$ is a constant w.r.t.\ $s\in\mathcal{S}_{t}$. By collecting the terms inside the sum of~\eqref{eq:main} into a single logarithmic term, we obtain
\[
\begin{split}
&\operatorname{KL}(T\|U)\hspace{-0.3mm}-\hspace{-0.2mm}\operatorname{KL}_D(p\|p^{\text{u}})=\hspace{-1mm}\sum_{t\in S}\hspace{-0.2mm}\int_{\mathcal{S}_{t}}\hspace{-1.7mm}T(s)\log\left(\frac{T(s)}{p_{t}/\lambda(\mathcal{S}_{t})}\right)\mathrm{d}s\\
&\hspace{65mm} =\operatorname{KL}(T\|T_{D}),
\end{split}
\]
where we have recognized
\[
T_{D}(s)=\begin{cases}
    \frac{p_{t}}{\lambda(\mathcal{S}_{t})}, & \text{when } \ s\in \mathcal{S}_{t}.
\end{cases}
\]\qed

\section{Discretization difference upper-bound (Lemma~\ref{lem:general_delta_upper_bound})}\label{apx:varepsilon_general}

Let us denote the discretization difference by $\Delta$, i.e., using Lemma~\ref{lem:simplification_of_Delta}, we define
\begin{equation}
\begin{split}
&\Delta := \operatorname{KL}(T\|U)-\operatorname{KL}_{D}(p\|p^{\mathrm{u}})\\
&\hspace{2cm}=\sum_{t\in S}\int_{\mathcal{S}_{t}}T(s)\log\left(\frac{T(s)}{p_{t}/\lambda(\mathcal{S}_{t})}\right)~\mathrm{d}s.\label{eq:Delta}
\end{split}
\end{equation}
Our goal is to find a computable upper-bound $\varepsilon$, such that $\varepsilon\ge \Delta$, without explicit knowledge of the underlying distribution density $T$. We assume only that $T$ a continuous distribution density with $\|\nabla T\|_{\infty}\le L$, and that it induces discrete distribution $p$ according to~\eqref{eq:discretization_of_T}. In other words, $\varepsilon$ must satisfy
\begin{subequations}\label{eq:prob_statement}
\begin{align}
\varepsilon\ge\max_{\tau}\sum_{t\in S}&\int_{\mathcal{S}_{t}}\tau(s)\log\left(\frac{\tau(s)}{p_{t}/\lambda(\mathcal{S}_{t})}\right)~\mathrm{d}s, \label{eq:prob_statement_objective} \\ 
\text{s.t. }&\tau : \mathcal{S}\rightarrow \mathbb{R}_{\ge 0},\label{eq:non_negativity_constraint}\\
    &\int_{\mathcal{S}_{t}}\tau(s)~\mathrm{d}s=p_{t}, \forall t \in S, \label{eq:T_must_map_to_p_constraint} \\
    &\|\nabla \tau(s)\|_{\infty}\le L, \  \forall t \in S, \label{eq:lipschitz_constraint} \\
    &\tau \text{ continuous on } \mathcal{S}. \label{eq:global_continuity_constraint}
\end{align}
\end{subequations}

Clearly, due to monotonicity of the logarithm and the sum, the expression 
\begin{equation}
\begin{split}
&\sum_{t\in S}\int_{\mathcal{S}_{t}}\tau(s)\log\left(\frac{\max_{c\in\mathcal{S}_{t}}\{\tau(c)\}}{p_{t}/\lambda(\mathcal{S}_{t})}\right)~\mathrm{d}s =\\
&\hspace{3cm}\sum_{t\in S}p_{t}\log\left(\frac{\max_{c\in\mathcal{S}_{t}}\{\tau(c)\}}{p_{t}/\lambda(\mathcal{S}_{t})}\right),\label{eq:first_upper-bounding_objective}
\end{split}
\end{equation}
forms an upper-bound on the value of objective~\eqref{eq:prob_statement_objective} for any $\tau$ continuous on $\mathcal{S}_{t}$ for $t\in S$, among which is the optimal $\tau^{*}$ that maximizes~\eqref{eq:prob_statement}. Instead of finding the exact value of $\max_{c\in \mathcal{S}_{t}}\{\tau(c)\}$ for every $t\in S$, and optimizing the resulting objective function~\eqref{eq:first_upper-bounding_objective} over $\tau$, we instead formulate analytical upper-bounds on $\max_{c\in \mathcal{S}_{t}}\{\tau(c)\}$ for every individual $t\in S$, given (a relaxed version of) constraints (\ref{eq:non_negativity_constraint}-\ref{eq:global_continuity_constraint}). Substituting those back into~\eqref{eq:first_upper-bounding_objective} guarantees a value that upper-bounds~\eqref{eq:prob_statement}, and therefore can serve as our expression for $\varepsilon$.

\subsection{Upper bounding the maximum tau value}

Toward the upper-bound on $\max_{c\in \mathcal{S}_{t}}\{\tau(c)\}$, for every $t\in S$, and every $\tau$ satisfying (\ref{eq:non_negativity_constraint}-\ref{eq:global_continuity_constraint}), we define $\overline{T}_{t}^{*}$ as the solution to optimization problem
\begin{subequations}\label{eq:max_T_over_St_problem}
\begin{align}
&\overline{T}_{t}^{*}:=\max_{\tau}\max_{c\in\mathcal{S}_{t}}\{\tau(c)\}, \\
\text{s.t. }&\tau : \mathcal{S}\to\mathbb{R}, \label{eq:first_tau_constraint_Ttstar}\\
    &\int_{\mathcal{S}_{t}}\tau(s)~\mathrm{d}s=p_{t}, \\
    &\|\nabla \tau(s)\|_{\infty}\le L, \\
    &\tau \text{ continuous on } \mathcal{S}_{t}, \label{eq:last_tau_constraint_Ttstar}
\end{align}
\end{subequations}
where the constraint relaxations w.r.t.\ the original (\ref{eq:non_negativity_constraint}-\ref{eq:global_continuity_constraint}) consist of (i) omitting the non-negativity constraint on $\tau$, and (ii) replacing the global continuity constraint by a local one. 

We note that~\eqref{eq:max_T_over_St_problem} can be rewritten as
\begin{equation}
\overline{T}_{t}^{*}:=\max_{c\in \mathcal{S}_{t}} \overline{T}_{t}(c), \label{eq:Tstar_main_problem}
\end{equation}
when $\overline{T}_{t}(c)$ is the solution to the inner-problem for a selected trajectory $c\in\mathcal{S}_{t}$, as
\begin{equation}\label{eq:T(c)}
    \overline{T}_{t}(c):=\max_{\tau} \ \tau(c),  \ \ \   \text{s.t. } (\ref{eq:first_tau_constraint_Ttstar}-\ref{eq:last_tau_constraint_Ttstar}).
\end{equation}
So, instead of searching for the highest achievable value over all points $s\in \mathcal{S}_{t}$, we first ask what is the highest value we can achieve at a selected point $c\in \mathcal{S}_{t}$. 

As mentioned in the main text, as every partition $\mathcal{S}_{t}$ is hyper-rectangular, we may describe it through its component intervals $\mathcal{S}_{t} = [\alpha_1^{(t)},\beta_1^{(t)}]\times[\alpha_2^{(t)},\beta_2^{(t)}]\times\cdots\times [\alpha_n^{(t)},\beta_n^{(t)}]$, for all $t\in S$. 

\begin{lemma}[Derivative of Solution to~\eqref{eq:T(c)}] \label{lem:NablaTR(c)}
For a function $\tau^{*}_c$ to maximize~\eqref{eq:T(c)}, for a selected $c\in \mathcal{S}_{t}$, it must satisfy
\begin{equation}
\nabla_j \tau^{*}_c(s) = \begin{cases}
    L & \text{for } s_j\in[\alpha_j^{(t)},c_j), \\
    -L & \text{for } s_j\in[c_j,\beta_j^{(t)}],
\end{cases} \label{eq:candidate_derivative_T(s)}
\end{equation}
for $j\in\{1,2,\dots,n\}$, for all $s\in \mathcal{S}_{t}$, where the choice to let $c_j$ belong to the second interval instead of the first is arbitrary and does not affect the following.
\end{lemma}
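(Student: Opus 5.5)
The plan is to prove a pointwise comparison bound for an arbitrary feasible $\tau$ of~\eqref{eq:T(c)}, integrate it against the mass constraint to bound $\tau(c)$ from above, and then read off from the equality case that the maximizer is the ``tent'' function whose partial derivatives are exactly~\eqref{eq:candidate_derivative_T(s)}.

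\textbf{Step 1 (comparison bound).} Fix $c\in\mathcal{S}_t$ and let $\tau$ satisfy (\ref{eq:first_tau_constraint_Ttstar}-\ref{eq:last_tau_constraint_Ttstar}). For any $s\in\mathcal{S}_t$, go from $c$ to $s$ along a staircase path that changes one coordinate at a time. This path stays inside the hyperrectangle $\mathcal{S}_t$. Along the $j$-th segment only $s_j$ varies, and $|\nabla_j\tau|\le\|\nabla\tau\|_\infty\le L$ holds almost everywhere. Together with continuity of $\tau$ on $\mathcal{S}_t$, this gives
\[
\tau(s)\;\ge\;\tau(c)-L\sum_{j=1}^{n}|s_j-c_j|=:\tau(c)-L\,d_1(s,c).
\]
This is the step I expect to be the main obstacle. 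The gradient bound only holds almost everywhere, and a.e.\ differentiability with continuity does not by itself give the fundamental theorem of calculus along segments (Cantor-type functions are the obstruction). I would first read the constraint $\|\nabla\tau\|_\infty\le L$ as membership of $\tau$ in $W^{1,\infty}(\mathcal{S}_t)$, which is how the paper uses it throughout. On a convex domain, a continuous $W^{1,\infty}$ function is Lipschitz with constant $L$ in the $\ell_1$ metric, so $\tau$ is absolutely continuous on almost every axis-parallel segment. The inequality then follows, first for almost every $s$ via Fubini and then for every $s$ by continuity.

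\textbf{Step 2 (integrate).} Integrate the bound over $\mathcal{S}_t$ and use $\int_{\mathcal{S}_t}\tau=p_t$:
\[
p_t\;\ge\;\tau(c)\lambda(\mathcal{S}_t)-L\int_{\mathcal{S}_t}d_1(s,c)\,\mathrm{d}s,
\qquad\text{so}\qquad
\tau(c)\;\le\;\frac{p_t+L\int_{\mathcal{S}_t}d_1(s,c)\,\mathrm{d}s}{\lambda(\mathcal{S}_t)}.
\]
The right-hand side does not depend on $\tau$.

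\textbf{Step 3 (attainment).} The bound is attained by
\[
\tau^*_c(s):=\overline{T}_t(c)-L\,d_1(s,c),
\]
with $\overline{T}_t(c)$ equal to the right-hand side above. This function is continuous and has mass $p_t$ by construction. Its partial derivatives are $\nabla_j\tau^*_c(s)=L$ for $s_j<c_j$ and $-L$ for $s_j>c_j$, which is exactly~\eqref{eq:candidate_derivative_T(s)}. The value at $s_j=c_j$ is a null set and so does not matter.

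\textbf{Step 4 (necessity).} Suppose $\tau$ is feasible and $\tau(c)=\overline{T}_t(c)$. Then the integrated inequality in Step 2 is an equality. The nonnegative continuous function $\tau(s)-\tau(c)+L\,d_1(s,c)$ therefore has zero integral over $\mathcal{S}_t$, so it vanishes identically. Hence $\tau=\tau^*_c$ on $\mathcal{S}_t$, and its derivatives are forced to be those in~\eqref{eq:candidate_derivative_T(s)}, which proves the lemma. Nonnegativity of $\tau$ is not required anywhere, consistent with the relaxation used to define~\eqref{eq:max_T_over_St_problem}.
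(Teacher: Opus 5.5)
Your proposal is correct and is essentially the paper's argument written directly rather than as a perturbation: the paper's claim that $\gamma=\tau-\tau^*_c$ attains its minimum over $\mathcal{S}_t$ at $c$ is exactly your inequality $\tau(s)\ge\tau(c)-L\,d_1(s,c)$. Both proofs then integrate against the mass constraint and use the zero-integral equality case to force uniqueness.

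Your reading of the gradient constraint as an $\ell_1$-Lipschitz ($W^{1,\infty}$) condition is the right one, and the paper's step from a.e.\ derivative signs to monotonicity of $\gamma$ tacitly needs it: with only continuity and an a.e.\ gradient bound, Cantor-type bumps make \eqref{eq:T(c)} unbounded. Once you have that Lipschitz property, the inequality holds for every $s$ directly, so you can drop the staircase/Fubini detour. With $c$ fixed, that detour would not literally work anyway, since the early staircase segments lie on a null family of lines.
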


\textit{Proof:} Let a candidate for optimality $\tau^{*}_c(s)$ satisfy~\eqref{eq:candidate_derivative_T(s)} and all constraints in~\eqref{eq:T(c)}. We can then write all alternative feasible solutions as perturbed versions of our candidate, as $\tau(s):=\tau^{*}_c(s)+\gamma(s)$. For these $\tau(s)$ to remain feasible (satisfy the integral and derivative constraints in~\eqref{eq:T(c)}), perturbation $\gamma(s)$ must satisfy $\int_{s\in \mathcal{S}_{t}}\gamma(s)~\mathrm{d}s=0$, and 
\[
\nabla_j \gamma(s) \in \begin{cases}
    [-2K,0] & \text{for } s_j\in[\alpha_j^{(t)},c_j), \\
    [0,2K] & \text{for } s_j\in[c_j,\beta_j^{(t)}],
\end{cases} 
\]
for $j\in\{1,2,\dots,n\}$.

Note that $\gamma(s)$, for any $s\in \mathcal{S}_{t}$, is restricted to be non-increasing in the $j$-direction when $s_j\in[\alpha_j^{(t)},c_j)$ and non-decreasing when $s_j\in[c_j,\beta_j^{(t)}]$. As a consequence, if we take $\gamma(s)$ at a point $s\in \mathcal{S}_{t}$ and fix all $s_r$ for $r\in\{1,2,\dots,n\}\setminus \{j\}$, we have that the resulting line parametrized by $s_j$ achieves a minimum at $s_j=c_j$, i.e., $\gamma(\begin{bmatrix} s_1 & s_2 & \cdots & s_j & \cdots & s_n\end{bmatrix}^{\top})\ge\gamma(\begin{bmatrix} s_1 & s_2 & \cdots & c_j & \cdots & s_n\end{bmatrix}^{\top})$, for any $s\in \mathcal{S}_{t}$. Clearly, this still holds when we replace more than one element of $s$ by its corresponding element in $c$, which implies that $\gamma(c)\le\gamma(s)$ for all $s\in \mathcal{S}_{t}$, i.e., the minimum of all admissible perturbations $\gamma$ lies at $c$.

Since $\gamma(c)\le \gamma(s)$ for all $s\in \mathcal{S}_{t}$, and $\int_{\mathcal{S}_{t}}\gamma(s)~\mathrm{d}s=0$, we must have that $\gamma(c)\le 0$, with equality only for $\gamma(s)\equiv0$. From the fact that all admissible perturbations $\gamma(s)$ w.r.t.\ $\tau_c^{*}(s)$ for $s\in \mathcal{S}_{t}$ must have $\gamma(c)\le 0$, we conclude that there are no functions $\tau(s)$ that improve upon $\tau_c^{*}(c)$, i.e., $\tau(c)\le \tau_c^{*}(c)$ for all $\tau(s)$ that satisfy~\eqref{eq:T(c)}. \qed

As mentioned in the main text, for any $j\in\{1,2,\dots,n\}$, $t\in S$, the length of the interval defining the $j$-th dimension of $\mathcal{S}_{t}$, is denoted $\delta_{j}^{(t)}:=\beta_j^{(t)}-\alpha_j^{(t)}$.

\begin{lemma}[Solution to~\eqref{eq:T(c)}] \label{lem:TR(c)}
The solution to~\eqref{eq:T(c)}, for any $c\in \mathcal{S}_{t}$, admits the closed-form representation
    \[
    \begin{split}
    &\overline{T}_{t}(c)=\frac{p_{t}}{\lambda(\mathcal{S}_{t})} + \frac{1}{2\lambda(\mathcal{S}_{t})}L\\
    &\hspace{1mm}\cdot\sum_{j=1}^{n}\left(\prod_{\ell\in\{1,\dots,n\}\setminus\{j\}}\hspace{-3mm}\delta_{\ell}^{(t)}\right)\left(\left(c_{j}-\alpha_{j}^{(t)}\right)^{2}+\left(\beta_{j}^{(t)}-c_{j}\right)^{2}\right).
    \end{split}
    \]
\end{lemma}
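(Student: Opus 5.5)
The plan is to take the structure of the maximizer from Lemma~\ref{lem:NablaTR(c)} and then use the mass constraint $\int_{\mathcal{S}_t}\tau(s)~\mathrm{d}s=p_t$ to pin down the only remaining free parameter, namely the value $\tau^{*}_c(c)=\overline{T}_{t}(c)$.

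The first step is to recover $\tau^*_c$ itself from its gradient. By \eqref{eq:candidate_derivative_T(s)}, the $j$-th partial derivative of $\tau^*_c$ depends only on $s_j$. It equals $L$ to the left of $c_j$ and $-L$ to the right. I will integrate along a coordinate-wise path from $c$ to $s$, changing one coordinate at a time. This is legitimate because $\tau^*_c$ is continuous and the gradient exists almost everywhere, so along each coordinate segment $\tau^*_c$ is Lipschitz and equals the integral of its derivative. The result is the separable ``pyramid''
\[
\tau^*_c(s)=\tau^*_c(c)-L\sum_{j=1}^{n}\left|s_j-c_j\right|,\qquad s\in\mathcal{S}_t.
\]
This function satisfies the gradient constraint, since $\|\nabla\tau^*_c\|_\infty=L$. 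Because the constraint is in the $\infty$-norm, every component may attain $\pm L$ at the same time; this is why the coefficient $L$ multiplies the full sum over $j$.

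The second step imposes the integral constraint. Using Fubini on the hyperrectangle \eqref{eq:mathcalS_t_is_hyperrectangular}, the constant term integrates to $\tau^*_c(c)\lambda(\mathcal{S}_t)$. Each term $|s_j-c_j|$ integrates over the other coordinates to give the factor $\prod_{\ell\ne j}\delta_\ell^{(t)}$. The remaining one-dimensional integral is elementary:
\[
\int_{\alpha_j^{(t)}}^{\beta_j^{(t)}}\left|s_j-c_j\right|~\mathrm{d}s_j=\tfrac12\left(\left(c_j-\alpha_j^{(t)}\right)^2+\left(\beta_j^{(t)}-c_j\right)^2\right).
\]
Setting the total integral equal to $p_t$ and solving this linear equation for $\tau^*_c(c)$ gives exactly the closed form stated in the lemma. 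Since Lemma~\ref{lem:NablaTR(c)} shows that every feasible $\tau$ has $\tau(c)\le\tau^*_c(c)$, this value is $\overline{T}_t(c)$. Feasibility is immediate, because nonnegativity was dropped in the relaxation \eqref{eq:max_T_over_St_problem}.

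The main obstacle is the first step: passing from the almost-everywhere gradient description to the explicit formula. In particular, I need to confirm that the candidate in Lemma~\ref{lem:NablaTR(c)} is unique up to its value at $c$. I would settle this by noting that two continuous functions with equal gradients almost everywhere on the connected set $\mathcal{S}_t$ differ by a constant. This follows by integrating along coordinate segments for almost every starting point and then using continuity. The remaining work is routine Fubini bookkeeping.
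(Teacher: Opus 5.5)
Your proposal is essentially the paper's proof. Integrating the sign pattern of Lemma~\ref{lem:NablaTR(c)} gives $\tau^*_c(s)=C_c-L\sum_{j=1}^{n}|s_j-c_j|$ (the paper's $o_c(s)^{\top}(s-c)+C_c$); the mass constraint, evaluated with Fubini and $\int_{\alpha_j^{(t)}}^{\beta_j^{(t)}}|s_j-c_j|\,\mathrm{d}s_j=\tfrac12((c_j-\alpha_j^{(t)})^2+(\beta_j^{(t)}-c_j)^2)$, is then solved for $C_c=\tau^*_c(c)$.

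The uniqueness step you flag as the main obstacle is unnecessary, because Lemma~\ref{lem:NablaTR(c)} only needs one feasible function with that gradient pattern, and the pyramid is one. Your proposed justification for it is also false for merely continuous functions, since Cantor-type staircases are continuous with zero gradient almost everywhere. So $\|\nabla\tau\|_\infty\le L$ should be read as the Lipschitz condition $|\tau(s)-\tau(s')|\le L\|s-s'\|_1$, which the paper also assumes implicitly; under that reading the bound follows at once by integrating $\tau(s)\ge\tau(c)-L\|s-c\|_1$ over $\mathcal{S}_t$.
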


\textit{Proof} Integration of~\eqref{eq:candidate_derivative_T(s)}, while satisfying continuity, provides us with a form of $\tau^{*}_c$, as
\begin{equation}
\tau_c^{*}(s) = o_{c}(s)^\top (s-c)  + C_c, \label{eq:T_c_general_form}
\end{equation}
where
\[
o_{c,j}(s_j):=\begin{cases}
    L & \text{for } s_j\in[\alpha_j^{(t)},c_j),\\
    -L & \text{for } s_j\in[c_j,\beta_j^{(t)}],
\end{cases}
\]
for all dimensions $j\in\{1,2,\dots,n\}$.

As $\tau_c^{*}(s)$ for $s\in \mathcal{S}_{t}$ is a solution to~\eqref{eq:T(c)}, we have that $\overline{T}_{t}(c)=\tau_c^{*}(c)$ for all $c\in \mathcal{S}_{t}$. From inspection of equation~\eqref{eq:T_c_general_form}, it becomes clear that $\tau_c^{*}(c)=C_c$, i.e., $\overline{T}_{t}(c)=C_c$.

The value of $C_c$ is determined by the integral constraint $p_{t}=\int_{s\in \mathcal{S}_{t}}T(s)~\mathrm{d}s$, which becomes
\begin{equation}
\begin{split}
&p_{t}=\int_{s\in \mathcal{S}_{t}}o_c(s)^{\top}(s-c) + C_c~\mathrm{d}s\\
&\hspace{1cm}=\int_{s\in \mathcal{S}_{t}}o_c(s)^{\top}(s-c)~\mathrm{d}s + \lambda(\mathcal{S}_{t})C_c, \label{eq:p_{T}_{t}mplicit}
\end{split}
\end{equation}
where $\int\limits_{\mathclap{s\in \mathcal{S}_{t}}}o_c(s)^{\top}(s-c)~\mathrm{d}s =\sum_{j=1}^{n}\int\limits_{\mathclap{s\in \mathcal{S}_{t}}}o_{c,j}(s_j)(s_j-c_j)~\mathrm{d}s$, where, for every $j\in \{1,2,\dots,n\}$, we find
\[
\begin{split}
&\int\limits_{\mathclap{s\in \mathcal{S}_{t}}}o_{c,j}(s_j)(s_j-c_j)~\mathrm{d}s\\
&\hspace{2cm}=\Bigg(\prod_{\ell\in\{1,\dots,n\}\setminus\{j\}}\hspace{-4mm}\delta_{\ell}^{(t)}\Bigg)\int\limits_{\mathclap{s_j\in[\alpha_j^{(t)},\beta_j^{(t)}]}}o_{c,j}(s_j)(s_j-c_j)~\mathrm{d}s_j, 
\end{split}
\]
as the function inside the integral only depends on $s_j$ and not on any of the other $s_\ell$ for $\ell\in\{1,2,\dots,n\}\setminus\{j\}$. We solve the above integral to obtain
\[
\begin{split}
&\int_{s_j\in[\alpha_j^{(t)},\beta_j^{(t)}]}o_{c,j}(s_j)(s_j-c_j)~\mathrm{d}s_j\\
&\hspace{2.3cm}=- L\frac{1}{2}\left(\left(c_j-\alpha_j^{(t)}\right)^{2}
+\left(\beta_j^{(t)}-c_j\right)^{2}\right),
\end{split}
\]
for every $j\in\{1,2,\dots,n\}$.

Substituting these results back into~\eqref{eq:p_{T}_{t}mplicit} yields
\[
\begin{split}
&p_{t}=\lambda(\mathcal{S}_{t})C_c -\frac{1}{2}L\\
&\hspace{2mm}\cdot \sum_{j=1}^{n}\Bigg(\prod_{\ell\in\{1,\dots,n\}\setminus\{j\}}\hspace{-4mm}\delta_{\ell}^{(t)}\Bigg)\left(\left(c_{j}-\alpha_{j}^{(t)}\right)^{2}
+\left(\beta_{j}^{(t)}-c_{j}\right)^{2}\right),
\end{split}
\]
which we can re-order to obtain
\[
\begin{split}
&C_c=\frac{p_{t}}{\lambda(\mathcal{S}_{t})} + \frac{1}{2\lambda(\mathcal{S}_{t})}L\\
&\hspace{2mm}\cdot \sum_{j=1}^{n}\Bigg(\prod_{\ell\in\{1,\dots,n\}\setminus\{j\}}\hspace{-4mm}\delta_{\ell}^{(t)}\Bigg)\left(\left(c_{j}-\alpha_{j}^{(t)}\right)^{2}+\left(\beta_{j}^{(t)}-c_{j}\right)^{2}\right),
\end{split}
\]
where we recall $\overline{T}_{t}(c)=\tau^{*}_c(c)=C_c$. \qed

\begin{theorem}[Solution to~\eqref{eq:Tstar_main_problem}]\label{thm:relaxation_solution}
The solution to~\eqref{eq:Tstar_main_problem} takes the form $\overline{T}_{t}^{*}=\frac{p_{t}}{\lambda(\mathcal{S}_{t})} + \frac{1}{2}L\sum_{j=1}^{n}\delta_j^{(t)}$.
\end{theorem}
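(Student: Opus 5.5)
The plan is to combine \eqref{eq:Tstar_main_problem} with the closed form for $\overline{T}_{t}(c)$ from Lemma~\ref{lem:TR(c)}. This reduces the problem to maximizing an explicit function of $c$ over the box $\mathcal{S}_{t}$. In the expression of Lemma~\ref{lem:TR(c)}, the term $p_{t}/\lambda(\mathcal{S}_{t})$ and the factor $\frac{1}{2\lambda(\mathcal{S}_{t})}L$ do not depend on $c$. Moreover, the $j$-th summand depends on $c$ only through $c_j$, because its weight $\prod_{\ell\ne j}\delta_{\ell}^{(t)}$ is constant. So the maximization over $c\in\mathcal{S}_{t}=\prod_j[\alpha_j^{(t)},\beta_j^{(t)}]$ decouples into $n$ independent one-dimensional problems. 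These are
\[
\max_{c_j\in[\alpha_j^{(t)},\beta_j^{(t)}]} f_j(c_j), \qquad f_j(c_j):=\left(c_j-\alpha_j^{(t)}\right)^2+\left(\beta_j^{(t)}-c_j\right)^2 .
\]

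Each $f_j$ is a convex quadratic in $c_j$, since its second derivative is $4>0$. Its maximum over the interval is therefore attained at an endpoint. At either $c_j=\alpha_j^{(t)}$ or $c_j=\beta_j^{(t)}$ it equals $(\delta_j^{(t)})^2$; the interior minimum at the midpoint gives only $(\delta_j^{(t)})^2/2$. Hence the maximizing $c$ is any vertex of the hyperrectangle $\mathcal{S}_{t}$, and it is feasible because $\mathcal{S}_{t}$ is closed. Substituting gives
\[
\overline{T}_{t}^{*}=\frac{p_{t}}{\lambda(\mathcal{S}_{t})}+\frac{L}{2\lambda(\mathcal{S}_{t})}\sum_{j=1}^{n}\Big(\prod_{\ell\ne j}\delta_{\ell}^{(t)}\Big)\big(\delta_j^{(t)}\big)^2 .
\]
Finally, $\lambda(\mathcal{S}_{t})=\prod_{\ell=1}^{n}\delta_{\ell}^{(t)}$, so each summand equals $\lambda(\mathcal{S}_{t})\,\delta_j^{(t)}$. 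Cancelling $\lambda(\mathcal{S}_{t})$ yields $\overline{T}_{t}^{*}=\frac{p_{t}}{\lambda(\mathcal{S}_{t})}+\frac12 L\sum_j\delta_j^{(t)}$.

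There is no real obstacle once Lemma~\ref{lem:TR(c)} is taken as given. The one step needing care is justifying that the $n$ coordinate problems can be maximized separately. This holds because the objective is a sum of functions of distinct coordinates and the feasible set is a Cartesian product, so the joint maximizer is obtained by picking an optimal endpoint in each coordinate. A sanity check worth including is the case $n=1$: a tent function of slope $L$ peaking at an endpoint $c$ of an interval of length $\delta$ has peak value $p_t/\delta+L\delta/2$, which matches the formula.
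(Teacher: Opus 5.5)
Your proposal is correct and follows the paper's own argument. You substitute the closed form from Lemma~\ref{lem:TR(c)}, note that each $(c_j-\alpha_j^{(t)})^2+(\beta_j^{(t)}-c_j)^2$ is maximized at an endpoint with value $(\delta_j^{(t)})^2$, and cancel using $\lambda(\mathcal{S}_t)=\prod_{\ell}\delta_\ell^{(t)}$; your explicit coordinate-wise decoupling, which relies on $L\ge 0$ and the positive weights $\prod_{\ell\ne j}\delta_\ell^{(t)}$, is just a slightly more careful version of the paper's one-line step.
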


\textit{Proof} As $\overline{T}_{t}^{*}=\max_{c\in \mathcal{S}_{t}}\overline{T}_{t}(c)$, and we know from Lemma \ref{lem:TR(c)} that
\[
\begin{split}
&\overline{T}_{t}(c) = \frac{p_{t}}{\lambda(\mathcal{S}_{t})} + \frac{1}{2\lambda(\mathcal{S}_{t})}L\\
&\hspace{2mm}\cdot \sum_{j=1}^{n}\Bigg(\prod_{\ell\in\{1,\dots,n\}\setminus\{j\}}\hspace{-4mm}\delta_{\ell}^{(t)}\Bigg)\left(\left(c_{j}-\alpha_{j}^{(t)}\right)^{2} +\left(\beta_{j}^{(t)}-c_{j}\right)^{2}\right),
\end{split}
\]
which, since $(c-\alpha)^{2}+(\beta-c)^{2}$ is a parabola with its minimum at $(\alpha+\beta)/2$, is clearly maximized for $c_j\in\{\alpha_j^{(t)},\beta_j^{(t)}\}$ for all $j\in\{1,2,\dots,n\}$. Note that thus for, e.g., $c_j=\alpha_j^{(t)}$, for all $j\in\{1,2,\dots,n\}$, we find
\[
\begin{split}
&\overline{T}_t^{*}=\max_{c\in \mathcal{S}_{t}}C_c \\
&= \frac{p_{t}}{\lambda(\mathcal{S}_{t})} + \frac{1}{2\lambda(\mathcal{S}_{t})}L\sum_{j=1}^{n}\Bigg(\prod_{\ell\in\{1,\dots,n\}\setminus\{j\}}\hspace{-4mm}\delta_{\ell}^{(t)}\Bigg)\left(\beta_{j}^{(t)}-\alpha_{j}^{(t)}\right)^{2} \\
&= \frac{p_{t}}{\lambda(\mathcal{S}_{t})} + \frac{1}{2\lambda(\mathcal{S}_{t})}L\sum_{j=1}^{n}\delta_j^{(t)}\lambda(\mathcal{S}_{t})= \frac{p_{t}}{\lambda(\mathcal{S}_{t})} + \frac{1}{2}L\sum_{j=1}^{n}\delta_j^{(t)}.
\end{split}
\] \qed

\textit{Proof of Lemma~\ref{lem:general_delta_upper_bound}:} Substitution of the upper-bound $\overline{T}_{t}^{*}\ge\max_{c\in \mathcal{S}_{t}}\{\tau(c)\}$ solution, see Thm.~\ref{thm:relaxation_solution}, back into~\eqref{eq:first_upper-bounding_objective}, we obtain
\[
\begin{split}
&\Delta\le\sum_{t\in S} p_{t}\log\left(\frac{\overline{T}_{t}^{*}}{p_{t}/\lambda(\mathcal{S}_{t})}\right)\\
&\hspace{2mm}=\sum_{t\in S}p_{t}\log\left(1 + \frac{1}{2p_{t}}L\lambda(\mathcal{S}_{t})\sum_{j=1}^{n}\delta_j^{(t)}\right),
\end{split}
\]
which we further bound from above through the use of $\overline{\delta}:=\max_{t\in S}\max_{j\in\{1,2,\dots,n\}}\delta_j^{(t)}$, for which we note that $\max_{t\in S}\lambda(\mathcal{S}_{t})\le (\overline{\delta})^{n}$, to find an even simpler upper bound (which for uniform hypercubic partitioning does not introduce additional conservatism), as
\begin{equation}
\Delta\le \sum_{t\in S}p_{t}\log\left(1 + \frac{n}{2p_{t}}L\overline{\delta}^{n+1}\right).\label{eq:last_component_bound}
\end{equation}
\qed

\section{Global discretization difference bound (Thm.~\ref{thm:global_varepsilon})}\label{apx:global_varepsilon}

\begin{lemma}[Concavity]\label{lem:concavity_f}
Functions of the form
\[
f(x) = \begin{cases}
    x\log(1+Cx^{-1}) &\text{for }x>0,\\
    0 & \text{for } x=0,
\end{cases}
\]
are concave in $x\ge 0$, for $C\ge 0$.
\end{lemma}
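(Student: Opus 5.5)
The plan is to show concavity by computing the second derivative on $x>0$, and then to handle the endpoint $x=0$ separately using continuity. If $C=0$, then $f\equiv 0$ and there is nothing to prove, so assume $C>0$.

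\emph{Step 1: derivatives on $x>0$.} Write $f(x)=x\log(x+C)-x\log x$. Differentiating gives
\[
f'(x)=\log(x+C)+\frac{x}{x+C}-\log x-1=\log\Bigl(1+\frac{C}{x}\Bigr)-\frac{C}{x+C}.
\]
Differentiating once more,
\[
f''(x)=\frac{1}{x+C}-\frac{1}{x}+\frac{C}{(x+C)^2}=\frac{x(x+C)-(x+C)^2+Cx}{x(x+C)^2}=\frac{-C^2}{x(x+C)^2}.
\]
This is strictly negative for every $x>0$. Hence $f$ is concave on the open half-line $(0,\infty)$.

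\emph{Step 2: continuity at $0$.} We need $\lim_{x\downarrow 0} f(x)=0=f(0)$. Since
\[
x\log(1+C/x)=x\log(x+C)-x\log x,
\]
and both $x\log(x+C)$ and $x\log x$ tend to $0$ as $x\downarrow 0$ (the latter by the convention $0\log 0=0$ stated in the paper), the limit is $0$. Therefore $f$ is continuous on $[0,\infty)$.

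\emph{Step 3: extend concavity to $[0,\infty)$.} Take $0\le x<y$ and $\theta\in(0,1)$. If $x>0$, the concavity inequality $f(\theta x+(1-\theta)y)\ge\theta f(x)+(1-\theta)f(y)$ follows directly from Step 1. If $x=0$, apply that inequality with $x_\epsilon=\epsilon>0$ in place of $x$, and let $\epsilon\downarrow 0$. By the continuity from Step 2, the inequality passes to the limit. A continuous function that is concave on the interior of an interval is therefore concave on its closure.

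The only point needing care is the boundary at $0$. Steps 2 and 3 handle it, so I expect no real obstacle; the main work is the algebra in Step 1.
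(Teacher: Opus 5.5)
Your proof is correct and follows essentially the paper's route: both show concavity from the sign of $f''(x) = -C^2/(x(x+C)^2)$ on $x>0$. The paper's expression carries an extra positive factor $1/\ln(2)$ because it uses base-2 logarithms, and your Steps 2--3 (continuity at $0$ and extension of concavity to $[0,\infty)$) supply the boundary detail the paper leaves implicit, which matters for the later Jensen step where some $p_t$ may be zero.
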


\textit{Proof} The second derivative $f^{\prime\prime}(x)= \frac{C}{\ln(2)}\frac{-C}{x(x+C)^2}$, is clearly negative for all $x> 0$, $C\ge 0$. \qed

\textit{Proof of Thm.~\ref{thm:global_varepsilon}} Jensen's inequality states that for any concave function $f(x)$, w.r.t.\ any finite set of points $p_1,p_2, \ldots, p_n$, and nonnegative weights $\alpha_1, \alpha_2, \ldots, \alpha_n$, with $\sum_i \alpha_i=1$, the following inequality holds $f\left(\sum_{t\in S} \alpha_t p_{t}\right) \geq \sum_{t\in S} \alpha_t f\left(p_{t}\right)$.

Using the fact that our discretization of $T$ satisfies $\sum_{t\in S}p_{t}=1$ and $p_{t}\ge 0$ for all $t\in S$, and that~\eqref{eq:last_component_bound} is concave in $p_{t}$ (Lemma \ref{lem:concavity_f}), Jensen's inequality tells us that
\[
f(1/|S|)=f\left(\sum_{t\in S}\frac{1}{|S|} p_{t}\right)\ge\sum_{t\in S}\frac{1}{|S|}f(p_{t}),
\]
which, when multiplied by $|S|$, yields $|S| f\left(1/|S|\right)\ge \sum_{t\in S}f(p_{t})$, for any choice of $(p_{t})_{t\in S}$ that sums to one. We thus conclude that, regardless of the true $p$, the sum over~\eqref{eq:last_component_bound} can be upper bounded as
\[
\begin{split}
\sum_{t\in S}p_{t}\log(1+\frac{nL}{2p_{t}}\overline{\delta}^{n+1})\le |S|\left(1/|S|\log(1+\frac{n}{2}|S|L\overline{\delta}^{n+1})\hspace{-1mm}\right)\hspace{-1mm}.
\end{split}
\]
Further, we may conclude that any discretization scheme satisfying $|S|\overline{\delta}^{n+1}\to0$ ensures $\varepsilon\to 0$, as $\varepsilon\ge \Delta\to 0$ from the fact that $\log(1)=0$, implying convergence for increasingly fine discretizations. \qed

\section{Local discretization difference bound (Thm.~\ref{thm:local_varepsilon})} \label{apx:local_varepsilon}

The KL divergence to uniform possesses the well-known recursive property of entropy metrics on Markov systems~\cite{Savas2020EntropyMDP_TL,Chen2022EntropyRM_LTL,Srivastava2021ParameterizedMDPs,vanZutphen2024PredictableIMDPs}. Letting $\Phi$ as~\eqref{eq:Phi}, and letting $\mathbb{R}^{\mathcal{X}}$ denote the space of functions mapping values in $\mathcal{X}$ to the real number line, we define functional $\phi : \mathcal{P}(\mathcal{S}) \times \mathbb{R}^{\mathcal{X}}\to \mathbb{R}$, as
\[
\phi(q,\mathcal{V}):=\int_{\mathcal{X}}q(x^{\prime})\log \frac{q(x^{\prime})}{1/\lambda(\mathcal{X})}~\mathrm{d}x^{\prime}+\int_{\mathcal{X}}q(x^{\prime})\mathcal{V}(x^{\prime})~\mathrm{d}x^{\prime}. 
\]

\begin{lemma}[Discrete KL Divergence to Uniform Recursivity]
\label{lem:recursive_dicrete_KL_to_uniform}
Let $p\in\mathbb{P}^{|S|}$ be a discrete trajectory distribution on $S=X^{K+1}$, induced by an initial distribution $\pi$ and Markov kernels $\{P_k\}_{k=0}^{K-1}$. The discrete KL divergence to uniform of $p$ then admits the form
\begin{subequations}\label{eq:discrete_KL_recursion}
\begin{equation}
\operatorname{KL}_{D}(p\|p^{\mathrm{u}})=\Phi(\pi,V_{0}), \label{eq:discrete_KL_recursion_0}
\end{equation}
where, for $k=K-1,K-2,\dots,0$,
\begin{equation}
V_k(i)=\Phi(P_{k}(i,\cdot),V_{k+1}), \label{eq:discrete_KL_recursion_k}
\end{equation}
\end{subequations}
initialized by $V_{K}(i)=0$, for all $i\in X$.
\end{lemma}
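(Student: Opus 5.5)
We prove the lemma by induction on the horizon, using the chain-rule decomposition of the KL divergence to uniform. The discrete trajectory distribution factors as $p_{t}=\pi_{i_0}\prod_{k=0}^{K-1}P_k(i_k,i_{k+1})$ for $t=(i_0,\dots,i_K)$. Because the trajectory partition is the product grid, the volumes factor the same way: $\lambda(\mathcal{S}_t)=\prod_{k=0}^{K}\lambda(\mathcal{X}_{i_k})$ and $\lambda(\mathcal{S})=\lambda(\mathcal{X})^{K+1}$. Hence $p^{\mathrm{u}}$ is itself the law of a Markov chain with initial distribution and transition rows all equal to the single-step uniform discretization $(\lambda(\mathcal{X}_j)/\lambda(\mathcal{X}))_{j\in X}$. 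The logarithm in $\operatorname{KL}_D(p\|p^{\mathrm{u}})$ therefore splits into a sum of single-step log-ratios:
\[
\log\frac{p_t\lambda(\mathcal{S})}{\lambda(\mathcal{S}_t)}=\log\frac{\pi_{i_0}\lambda(\mathcal{X})}{\lambda(\mathcal{X}_{i_0})}+\sum_{k=0}^{K-1}\log\frac{P_k(i_k,i_{k+1})\lambda(\mathcal{X})}{\lambda(\mathcal{X}_{i_{k+1}})}.
\]

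Next we define the conditional KL-to-uniform of the trajectory tail. For each $k$ and $i$, let $W_k(i)$ be the expected value, under the chain started at $i_k=i$, of the sum of the log-ratio terms with indices $k,\dots,K-1$. We set $W_K\equiv 0$.

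We then show by backward induction that $W_k=V_k$ from \eqref{eq:discrete_KL_recursion_k}. Conditioning on $i_{k+1}=j$ and applying the Markov property gives
\[
W_k(i)=\sum_j P_k(i,j)\left[\log\frac{P_k(i,j)\lambda(\mathcal{X})}{\lambda(\mathcal{X}_j)}+W_{k+1}(j)\right],
\]
which is exactly $\Phi(P_k(i,\cdot),W_{k+1})$ by \eqref{eq:Phi}.

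Finally, taking the expectation of the full log-ratio sum and conditioning on $i_0$ yields $\operatorname{KL}_D(p\|p^{\mathrm{u}})=\sum_{i}\pi_i\big[\log(\pi_i\lambda(\mathcal{X})/\lambda(\mathcal{X}_i))+W_0(i)\big]=\Phi(\pi,V_0)$, which is \eqref{eq:discrete_KL_recursion_0}.

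The main obstacle is bookkeeping rather than depth, and it comes in two parts. First, we must verify the volume factorization of $\lambda(\mathcal{S}_t)$ and $\lambda(\mathcal{S})$ that makes $p^{\mathrm{u}}$ Markov; this relies on the hyperrectangular product structure in Assumption~\ref{assumptions}. Second, we must handle zero-probability transitions. Terms with $P_k(i,j)=0$ are covered by the convention $0\log 0=0$: such trajectories carry zero mass, so dropping them from the sum is consistent with dropping the corresponding terms in $\Phi$. With these two points settled, exchanging the finite sums in the conditioning step is immediate.
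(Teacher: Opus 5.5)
Your proof is correct and follows essentially the paper's route: the paper runs the same backward induction, identifying $V_k(i)$ with the conditional tail divergence $\operatorname{KL}_{D}(\mathsf{X}_{k+1:K}|\mathsf{X}_{k}=i\,\|\,\mathsf{U}_{k+1:K}|\mathsf{U}_{k}=i)$ (your $W_k(i)$), and then closes with the chain rule for KL divergence. Your explicit splitting of the log-ratio via $\lambda(\mathcal{S}_t)=\prod_k\lambda(\mathcal{X}_{i_k})$ just makes visible a fact the paper uses only implicitly, namely that $p^{\mathrm{u}}$ is itself a product (hence Markov) law with one-step distribution $\lambda(\mathcal{X}_j)/\lambda(\mathcal{X})$; note this needs only the product structure of the trajectory partition, not hyperrectangular cells.
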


\textit{Proof:} See Appendix~\ref{sec:path_KL_divergence}.

\begin{lemma}[Continuous KL Divergence to Uniform]
\label{lem:recursive_continuous_KL_to_uniform}
Given a Markov system $\mathcal{M}=(\mathcal{X},q_0,q,K)$, inducing trajectory distribution $T$~\eqref{eq:T}, the KL divergence to uniform admits the form $\operatorname{KL}(T\|U)=\phi(q_{0},\mathcal{V}_{0})$, where, for $k=K-1,K-2,\dots,0$, $\mathcal{V}_k(x)=\phi(q(x,\cdot),\mathcal{V}_{k+1})$, initialized by $\mathcal{V}_{K}(x)=0$, for all $x\in\mathcal{X}$.
\end{lemma}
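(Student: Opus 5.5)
We prove Lemma~\ref{lem:recursive_continuous_KL_to_uniform} by unrolling the recursion and matching it term by term with the chain-rule decomposition of $\operatorname{KL}(T\|U)$. The approach mirrors the discrete case of Lemma~\ref{lem:recursive_dicrete_KL_to_uniform}, with sums replaced by integrals. The uniform density on $\mathcal{S}=\mathcal{X}^{K+1}$ factorizes as $U(s)=\lambda(\mathcal{S})^{-1}=\prod_{k=0}^{K}\lambda(\mathcal{X})^{-1}$, since $\lambda(\mathcal{S})=\lambda(\mathcal{X})^{K+1}$. Inserting the product form~\eqref{eq:T} of $T$ into~\eqref{eq:T_to_uniform} therefore splits the logarithm into a sum:
\[
\log\frac{T(s)}{U(s)}=\log\big(q_0(x_0)\lambda(\mathcal{X})\big)+\sum_{k=1}^{K}\log\big(q(x_{k-1},x_k)\lambda(\mathcal{X})\big).
\]
Hence $\operatorname{KL}(T\|U)$ is the sum of $K+1$ integrals of $T(s)$ against these individual log terms.

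For each term we integrate out the future states $x_{k+1},\dots,x_K$. This uses $\int_{\mathcal{X}}q(x,x')\,\mathrm{d}x'=1$ repeatedly from the end, and relies on Fubini/Tonelli, which applies because the integrands are bounded on a compact set under Assumption~\ref{assumptions}. The $k$-th term then reduces to
\[
\int q_0(x_0)\prod_{m=1}^{k-1}q(x_{m-1},x_m)\Big[\int_{\mathcal{X}}q(x_{k-1},x_k)\log\big(q(x_{k-1},x_k)\lambda(\mathcal{X})\big)\,\mathrm{d}x_k\Big]\,\mathrm{d}x_0\cdots\mathrm{d}x_{k-1}.
\]

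Next we prove by backward induction on $k$ that
\[
\mathcal{V}_k(x_k)=\mathbb{E}\Big[\sum_{m=k+1}^{K}\log\big(q(x_{m-1},x_m)\lambda(\mathcal{X})\big)\,\Big|\,x_k\Big].
\]
The base case $\mathcal{V}_K=0$ is immediate. For the inductive step, the definition of $\phi$ gives
\[
\phi(q(x,\cdot),\mathcal{V}_{k+1})=\int q(x,x')\log\big(q(x,x')\lambda(\mathcal{X})\big)\,\mathrm{d}x'+\int q(x,x')\mathcal{V}_{k+1}(x')\,\mathrm{d}x',
\]
which is exactly the one-step expectation of the next log term plus the conditional expectation of the remaining ones, by the Markov property. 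Applying the same identity once more at $k=0$ with $q_0$ gives $\phi(q_0,\mathcal{V}_0)=\mathbb{E}[\log(q_0(x_0)\lambda(\mathcal{X}))]+\mathbb{E}[\mathcal{V}_0(x_0)]$, which is the full sum and hence equals $\operatorname{KL}(T\|U)$.

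The main obstacle is justifying the interchange of integration and the finiteness of every term. The factors $x\log x$ are problematic only where the densities vanish, and there the convention $0\log 0=0$ applies. Since $q$ and $q_0$ are bounded by continuity on the compact $\mathcal{X}$, each integrand $q\log(q\lambda(\mathcal{X}))$ is bounded, so all integrals are absolutely convergent and Tonelli/Fubini applies. We would also check that each $\mathcal{V}_k$ is measurable and bounded, so that $\phi(\cdot,\mathcal{V}_{k+1})$ is well defined. This follows inductively because it is a parametric integral of bounded measurable functions.
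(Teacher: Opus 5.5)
Your proof is correct and takes essentially the same route as the paper. The paper transfers the discrete argument: backward induction identifies $\mathcal{V}_k(x)$ with the conditional KL divergence of the future trajectory given $x_k=x$, and the chain rule for KL divergences then closes the argument. Your explicit splitting of $\log(T/U)$ into per-step terms is that chain rule written out, and your boundedness and Fubini remarks supply integrability details the paper leaves implicit.
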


\textit{Proof:} See Appendix~\ref{sec:path_KL_divergence_cont}.

\subsection{Local discretization difference bound}

\textit{Proof of Thm.~\ref{thm:local_varepsilon}:}
In this section, we combine Lemmas~\ref{lem:general_delta_upper_bound},~\ref{lem:recursive_dicrete_KL_to_uniform},
and~\ref{lem:recursive_continuous_KL_to_uniform} to bound the continuous-system
KL divergence to uniform through a corrected discrete-space dynamic programming
recursion.

For $x\in\mathcal{X}_i$, define the induced discrete transition probabilities $P_x(j) := \int_{\mathcal{X}_j} q(x,x^{\prime})\,\mathrm{d}x^{\prime}, \qquad j\in X$. By construction, $P_x(\cdot)\in\mathbb{P}_i$ for all $x\in\mathcal{X}_i$.

From Lemma~\ref{lem:general_delta_upper_bound}, we have, for all $x\in\mathcal{X}$, that
\[
\begin{split}
\phi(q(x,\cdot),0)-\Phi(P_x,0)&= \operatorname{KL}(q(x,\cdot)\|U_{\mathcal{X}})-\operatorname{KL}_D(P_x\|p^{u}_{\mathcal{X}}) \\
&\hspace{2cm}\le \varepsilon(P_x,n_{x},L_{\nabla q},\overline{\delta}).
\end{split}
\]
Rearranging gives
\[
\Phi(P_x,0)+\varepsilon(P_x,n_{x},L_{\nabla q},\overline{\delta})\ge \phi(q(x,\cdot),0), \qquad \forall x\in\mathcal{X}_i.
\]

\begin{lemma}
    If the following pointwise condition holds for all $i\in X$
    \begin{equation}
    \label{eq:pointwise_condition}
    \mathrm{(i)} \qquad 
    \Phi^{\varepsilon}(P_x,0) \;\ge\; \phi(q(x,\cdot),0), 
    \qquad \forall x\in\mathcal{X}_i,
    \end{equation}
    then~\eqref{eq:local_epsilon_recursion}, initialized by $V_{K}^{\varepsilon}(i)=0$, for all $i\in X$, ensures condition
    \[
        \mathrm{(ii)} \ \ \ V^{\varepsilon}_k(i)\ge \max_{x\in \mathcal{X}_{i}}\mathcal{V}_{k}(x),
    \]
    holds for all $i,k\in X\times\{0,1,\dots,K\}$, and (from this), we have
    \[
        \Phi^{\varepsilon}(\pi,V^{\varepsilon}_0)\ge \phi(q_0,\mathcal{V}_{0}).
    \]
\end{lemma}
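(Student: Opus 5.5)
The plan is a backward induction on $k$ for condition (ii), followed by one more application of the same one-step argument at the initial distribution. The central observation is that both $\Phi^{\varepsilon}$ and $\phi$ split additively into a ``zero-continuation'' part and a linear continuation part:
\[
\Phi^{\varepsilon}(p^{\prime},V)=\Phi^{\varepsilon}(p^{\prime},0)+\sum_{j\in X}p^{\prime}_jV_j,\qquad \phi(q,\mathcal{V})=\phi(q,0)+\int_{\mathcal{X}}q(x^{\prime})\mathcal{V}(x^{\prime})~\mathrm{d}x^{\prime}.
\]
This holds because the correction $\varepsilon(p^{\prime},n_x,L_{\nabla q},\overline{\delta})$ depends only on $p^{\prime}$ and not on $V$. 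Condition (i) handles the zero-continuation parts, so the induction only has to compare the linear parts.

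\emph{Base case.} At $k=K$ we have $V^{\varepsilon}_K(i)=0=\mathcal{V}_K(x)$ for all $i$ and $x$, so (ii) holds trivially.

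\emph{Inductive step.} Assume (ii) holds at $k+1$, i.e.\ $V^{\varepsilon}_{k+1}(j)\ge \mathcal{V}_{k+1}(x^{\prime})$ for all $x^{\prime}\in\mathcal{X}_j$. Fix $i$ and $x\in\mathcal{X}_i$. Splitting the integral over the partition and using $q(x,\cdot)\ge0$ gives
\[
\int_{\mathcal{X}}q(x,x^{\prime})\mathcal{V}_{k+1}(x^{\prime})~\mathrm{d}x^{\prime}=\sum_{j\in X}\int_{\mathcal{X}_j}q(x,x^{\prime})\mathcal{V}_{k+1}(x^{\prime})~\mathrm{d}x^{\prime}\le\sum_{j\in X}P_x(j)V^{\varepsilon}_{k+1}(j).
\]
Adding this to (i) yields $\phi(q(x,\cdot),\mathcal{V}_{k+1})\le \Phi^{\varepsilon}(P_x,V^{\varepsilon}_{k+1})$. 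Since $P_x\in\mathbb{P}_i$, the right-hand side is at most $\max_{P_k(i)\in\mathbb{P}_i}\Phi^{\varepsilon}(P_k(i),V^{\varepsilon}_{k+1})=V^{\varepsilon}_k(i)$. The left-hand side is $\mathcal{V}_k(x)$ by Lemma~\ref{lem:recursive_continuous_KL_to_uniform}. Taking the supremum over $x\in\mathcal{X}_i$ gives (ii) at step $k$.

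\emph{Final inequality.} Repeat the same computation with $q_0$ and $\pi$ in place of $q(x,\cdot)$ and $P_x$, noting $\pi_j=\int_{\mathcal{X}_j}q_0$. Using (ii) at $k=0$ gives $\int q_0\mathcal{V}_0\le\sum_j\pi_jV^{\varepsilon}_0(j)$, and it remains to show $\Phi^{\varepsilon}(\pi,0)\ge\phi(q_0,0)$.

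\emph{Main obstacle.} The one step not literally covered by hypothesis (i) is this initial-distribution analogue. I would obtain it exactly as the paper obtained (i): apply Lemma~\ref{lem:general_delta_upper_bound} on the single-state space $\mathcal{X}$ (dimension $n_x$) to $q_0$. This is valid because Assumption~\ref{assumptions} bounds $\|\nabla q_0\|_\infty$ by the same $L_{\nabla q}$. Combined with Lemma~\ref{lem:recursive_continuous_KL_to_uniform}, $\phi(q_0,\mathcal{V}_0)=\operatorname{KL}(T\|U)$, this yields $\Phi^{\varepsilon}(\pi,V^{\varepsilon}_0)\ge\phi(q_0,\mathcal{V}_0)$. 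The remaining technical care is minor: the ``$\max$'' over $\mathcal{X}_i$ in (ii) should be read as a supremum, and $\mathcal{V}_{k+1}$ must be measurable so that the integral is defined.
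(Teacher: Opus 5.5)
Your proof is correct and follows essentially the same backward induction as the paper: you split $\Phi^{\varepsilon}$ and $\phi$ into a zero-continuation part controlled by (i) and a linear part controlled by the inductive hypothesis together with $P_x\in\mathbb{P}_i$. Your one addition is worthwhile: the final inequality needs the initial-distribution analogue $\Phi^{\varepsilon}(\pi,0)\ge\phi(q_0,0)$, which (i) does not cover and the paper leaves under ``follows analogously,'' and you correctly supply it from Lemma~\ref{lem:general_delta_upper_bound} using the bound on $\nabla q_0$ in Assumption~\ref{assumptions}.
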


We show that this implies condition~(ii). The claim is trivial for $k=K$.
Assume condition~(ii) holds at time $k+1$, and consider time $k$. Then, for all
$i\in X$,
\[
\begin{aligned}
&V_{k}^{\varepsilon}(i)= \max_{p^{\prime}\in\mathbb{P}_{i}}
\Phi^{\varepsilon}(p^{\prime},V_{k+1}^{\varepsilon}) \\
&= \max_{p^{\prime}\in\mathbb{P}_{i}}
\Big(\Phi^{\varepsilon}(p^{\prime},0)+\sum_{j\in X}p^{\prime}_{j}V_{k+1}^{\varepsilon}(j)\Big) \\
&\ge \max_{p^{\prime}\in\mathbb{P}_{i}}
\Big(\Phi^{\varepsilon}(p^{\prime},0)+\sum_{j\in X}p^{\prime}_{j}
\max_{x^{\prime}\in\mathcal{X}_j}\mathcal{V}_{k+1}(x^{\prime})\Big) \\
&\ge \max_{x\in\mathcal{X}_i}
\Big(\Phi^{\varepsilon}(P_x,0)+\sum_{j\in X}
\Big(\int_{\mathcal{X}_j} q(x,x^{\prime})\,\mathrm{d}x^{\prime}\Big)
\max_{x^{\prime\prime}\in\mathcal{X}_j}\mathcal{V}_{k+1}(x^{\prime\prime})\Big) \\
&\ge \max_{x\in\mathcal{X}_i}
\Big(\phi(q(x,\cdot),0)+\sum_{j\in X}
\int_{\mathcal{X}_j} q(x,x^{\prime})\,\max_{x^{\prime\prime}\in\mathcal{X}_j}\mathcal{V}_{k+1}(x^{\prime\prime})
\,\mathrm{d}x^{\prime}\Big) \\
&\ge \max_{x\in\mathcal{X}_i}
\Big(\phi(q(x,\cdot),0)+\sum_{j\in X}
\int_{\mathcal{X}_j} q(x,x^{\prime})\,\mathcal{V}_{k+1}(x^{\prime})\,\mathrm{d}x^{\prime}\Big) \\
&= \max_{x\in\mathcal{X}_i}
\Big(\phi(q(x,\cdot),0)+\int_{\mathcal{X}} q(x,x^{\prime})\,\mathcal{V}_{k+1}(x^{\prime})
\,\mathrm{d}x^{\prime}\Big)= \max_{x\in\mathcal{X}_i}\mathcal{V}_{k}(x).
\end{aligned}
\]
This establishes condition~(ii). The final claim
$\Phi^{\varepsilon}(\pi,V^{\varepsilon}_0)\ge \phi(q_0,\mathcal{V}_0)$ follows
analogously.
\qed


\section{Discrete KL divergence to uniform recursivity (Lemma~\ref{lem:recursive_dicrete_KL_to_uniform})}\label{sec:path_KL_divergence}

Let us define an alternative discrete KL divergence notation, denoting the divergence between two random variables $\mathsf{X}$ and $\mathsf{Y}$ on the same alphabet $X:=\{1,2,\dots,|S|\}$, as
\begin{equation}
\operatorname{KL}_{D}(\mathsf{X} \| \mathsf{Y}) :=\sum_{i\in X}\operatorname{Prob}(\mathsf{X} = i)\log \frac{\operatorname{Prob}(\mathsf{X}=i)}{\operatorname{Prob}(\mathsf{Y}=i)}.\label{eq:KL_divergence_definition_random_variables}
\end{equation}

\textit{Proof of Lemma \ref{lem:recursive_dicrete_KL_to_uniform}:} Let us start by proving, through induction, that 
\begin{equation}
V_0(i) = \operatorname{KL}_{D}(\mathsf{X}_{1:K}|\mathsf{X}_{0}=i \| \mathsf{U}_{1:K}|\mathsf{U}_0=i), \label{eq:recursion_proof_goal}
\end{equation}
for all $i\in X$, when iterating according to~\eqref{eq:discrete_KL_recursion_k}.

If for iteration $k+1$, we have 
\begin{equation}
V_{k+1}(i) = \operatorname{KL}_{D}(\mathsf{X}_{k+2:K}|\mathsf{X}_{k+1}=i\|\mathsf{U}_{k+2:K}|\mathsf{U}_{k+1}=i), \label{eq:recursion_induction_condition}
\end{equation}
for all $i\in X$, then for iteration $k$, using~\eqref{eq:Phi}, \eqref{eq:discrete_KL_recursion}, we have
\begin{equation}
\begin{split}
V_k(i) &= \underbrace{\sum_{j\in X}P_{k}(i,j)\log \frac{P_{k}(i,j)\lambda(\mathcal{X})}{\lambda(\mathcal{X}_j)}}_{\operatorname{KL}_{D}(\mathsf{X}_{k+1}|\mathsf{X}_{k}=i\|\mathsf{U}_{k+1}|\mathsf{U}_{k}=i)}\\
&\hspace{2cm}+\underbrace{\sum_{j\in X} P_{k}(i,j) V_{k+1}(j)}_{\operatorname{KL}_{D}(\mathsf{X}_{k+2:K}|\mathsf{X}_{k+1}\|\mathsf{U}_{k+2:K}|\mathsf{U}_{k+1})} \\
&=\operatorname{KL}_{D}(\mathsf{X}_{k+1:K}|\mathsf{X}_{k}=i\|\mathsf{U}_{k+1:K}|\mathsf{U}_{k}=i).
\end{split} \label{eq:recursion_induction}
\end{equation}

Secondly, as we initialize with $V_{K}(i)=0$, $ i\in X$, we have
\[
\begin{split}
V_{K-1}(i)&=\sum_{j\in X}P_{K-1}(i,j)\log \frac{P_{K-1}(i,j)\lambda(\mathcal{X})}{\lambda(\mathcal{X}_j)} \\
&= \operatorname{KL}_{D}(\mathsf{X}_{K}|\mathsf{X}_{K-1}=i\|\mathsf{U}_{K}|\mathsf{U}_{K-1}=i),
\end{split}
\]
satisfying~\eqref{eq:recursion_induction_condition} for $k=K-2$. Using identity~\eqref{eq:recursion_proof_goal}, we can now confirm identity~\eqref{eq:discrete_KL_recursion_0} by noting that
\[
\begin{split}
&\Phi(\pi,V_0)=\operatorname{KL}_{D}(\mathsf{X}_{0}\|\mathsf{U}_0)+ \operatorname{KL}_{D}(\mathsf{X}_{1:K}|\mathsf{X}_{0}\|\mathsf{U}_{1:K}|\mathsf{U}_0)\\
&\hspace{4cm}= \operatorname{KL}_{D}(\mathsf{X}_{0}\cdots\mathsf{X}_{K}\|\mathsf{U}_0\cdots \mathsf{U}_{K}),
\end{split}
\]
the last step following from the chain rule for conditional KL divergences.\qed

\section{Continuous KL divergence to uniform recursivity (Lemma~\ref{lem:recursive_continuous_KL_to_uniform})}\label{sec:path_KL_divergence_cont}

Let us define an alternative KL divergence notation, denoting the divergence between two random variables $\mathsf{X}$ and $\mathsf{Y}$ on the same compact alphabet $\mathcal{X}$, as
\begin{equation}
\operatorname{KL}(\mathsf{X} \| \mathsf{Y}) :=\int_{\mathcal{X}}q(\mathsf{X} = x)\log \frac{q(\mathsf{X}=x)}{q(\mathsf{Y}=x)}~\mathrm{d}x,\label{eq:KL_divergence_definition_random_variables_cont}
\end{equation}
where $\operatorname{Prob}(\mathsf{X}\in A)=\int_{A}q_{\mathsf{X}}(x)~\mathrm{d}x$, i.e., $q(\cdot)$ is the respective probability density. Notation $q(\mathsf{X}=x)$ and $q_{\mathsf{X}}(x)$ is used interchangeably to denote the same object.

\textit{Proof of Lemma~\ref{lem:recursive_continuous_KL_to_uniform}:} This proof is straightforwardly obtained by applying the exact steps from the Lemma~\ref{lem:recursive_dicrete_KL_to_uniform} proof to the continuous conditions described above~\eqref{eq:KL_divergence_definition_random_variables_cont}. \qed

\section{Bounding the gradient (Lemma~\ref{lem:bounding_the_gradient})} \label{apx:bounding_the_gradient_proof}

\textit{Proof of Lemma~\ref{lem:bounding_the_gradient}:} Note that $\mathcal{S}=\mathcal{X}^{K+1}$, and thus we can interpret trajectory $s\in \mathcal{S}$ as $s=x_{0}x_{1}\cdots x_{K}$. The derivative of $T(x_0x_1\cdots x_{K}) = q_0(x_0)\prod_{k=1}^{K}q(x_{k-1},x_k)$, w.r.t.\ element $j\in\{1,2,\dots,n_{x}\}$ of  $x_{k,j}$, for $k\in\{1,2,\dots,K-1\}$, is found as
\[
\begin{split}
&\nabla_{x_{k,j}}T = q_0(x_0) \Big(\hspace{2mm} \prod_{\mathclap{t\in\{1,\dots,K\}\setminus\{k\}}} \ q(x_{t-1},x_t)\Big)\frac{\partial }{\partial x_{k,j}}q(x_{k-1},x_{k})\\
&\hspace{1.9cm}+ q_0(x_0)\Big(\hspace{2mm} \prod_{\mathclap{t\in\{1,\dots,K\}\setminus\{k+1\}}} \ q(x_{t-1},x_t)\Big)\frac{\partial }{\partial x_{k,j}}q(x_{k},x_{k+1}).
\end{split}
\]
Given the bounds $L_{\nabla q}\in\mathbb{R}_{\ge 0 }$ and $L_{q}\in\mathbb{R}_{\ge 0 }$~\eqref{eq:bounds_on_markov_kernels}, for all $j\in\{1,2,\dots,n_{x}\}$, we upper-bound the above equation as
\[
\|\nabla_{x_{k,j}}T(x_{0:K})\|_{\infty} \le 2(L_{q})^{K}L_{\nabla q},
\]
for any $k\in\{1,2,\dots,K-1\}$. Repeating the above steps for $k\in\{0,K\}$ yields $\nabla_{x_{k},j}T(x_{0:K}) \le (L_{q})^{K}L_{\nabla q}$, for $k\in\{0,K\}$. \qed

\end{document}